\providecommand{\tabularnewline}{\\}
\providecommand{\algorithmname}{Algorithm}
\numberwithin{equation}{section}
\numberwithin{figure}{section}
 \theoremstyle{definition}
 \newtheorem*{defn*}{\protect\definitionname}
 \theoremstyle{assumption}
 \newtheorem*{assn*}{\protect\assumptionname}
 \theoremstyle{assumption}
 \newtheorem{assn}{\protect\assumptionname}
  \theoremstyle{remark}
  \newtheorem*{rem*}{\protect\remarkname}
\theoremstyle{plain}
\newtheorem{thm}{\protect\theoremname}
  \theoremstyle{plain}
  \newtheorem*{fact*}{\protect\factname}
  \theoremstyle{plain}
  \theoremstyle{plain}
  \newtheorem{prop}[thm]{\protect\propositionname}
  \theoremstyle{plain}
  \newtheorem{cor}[thm]{\protect\corollaryname}
  \theoremstyle{plain}
  \newtheorem*{lem*}{\protect\lemmaname}
  \providecommand{\corollaryname}{Corollary}
  \providecommand{\definitionname}{Definition}
  \providecommand{\assumptionname}{Assumption}  
  \providecommand{\factname}{Fact}
  \providecommand{\lemmaname}{Lemma}
  \providecommand{\propositionname}{Proposition}
  \providecommand{\remarkname}{Remark}
	\providecommand{\theoremname}{Theorem}
\newcommand{\dx}{{\, \rm d}x}
\newcommand{\db}{{\, \rm d}b}
\newcommand{\dy}{{\, \rm d}y}
\newcommand{\du}{{\, \rm d}u}
\newcommand{\D}{{\rm d}}
\newcommand{\KuS}{K \vee S_{0}}
\let\originalleft\left
\let\originalright\right
\renewcommand{\left}{\mathopen{}\mathclose\bgroup\originalleft}
\renewcommand{\right}{\aftergroup\egroup\originalright}
\begin{document}

\title[A Forward Equation for Barrier Options]{A Forward Equation for Barrier Options under the Brunick\&Shreve
Markovian Projection}

\author{Ben Hambly$^{\dagger}$, Matthieu Mariapragassam$^{\dagger}$ \& Christoph Reisinger$^{\dagger}$}

\thanks{$\!\!\!\!\! ^{\dagger}$\textsc{\small{}}
\textsc{\small{}Mathematical Institute \& Oxford-Man Institute}\\
\textsc{\small{}University of Oxford}\\
Oxford OX2 6HD|6ED, UK\\
\{ben.hambly,~matthieu.mariapragassam,~christoph.reisinger\}@maths.ox.ac.uk\textsc{\small{}$ $}\\
\\
The authors gratefully acknowledge the financial support of the \textsc{Oxford-Man
Institute of Quantitative Finance} and \textsc{BNP Paribas London}
for this research project.
}
\begin{abstract}
We derive a forward equation for arbitrage-free barrier option prices in continuous semi-martingale models, in terms of Markovian
projections of the stochastic volatility process.
This leads to a Dupire-type
formula for the coefficient derived by Brunick and Shreve for their
mimicking diffusion and can be interpreted as the canonical extension
of local volatility for barrier options. Alternatively, a forward
partial-integro differential equation (PIDE) allows computation of
up-and-out call prices under such a model,
for the complete set of strikes, barriers and maturities from a single equation.
In the same way as the vanilla forward Dupire PDE, the above-named forward
PIDE can serve as a building block for an efficient calibration routine
including barrier option quotes. We propose a discretisation scheme for the PIDE
as well as a numerical validation. 
\end{abstract}
\maketitle

\section{Introduction}

Efficient pricing and hedging of exotic derivatives requires a model
which is rich enough to re-price accurately a range of liquidly traded market products.
The case of calibration to vanilla options is now widely documented and has
been considered extensively in the literature since the work of Dupire \cite{Dupire}
in the context of local volatility models (see also Gyöngy's formula \cite{Gyongy1986}).
Nowadays, the exact re-pricing of call options is a must-have standard, and LSV (local-stochastic volatility) models are the \emph{state-of-the-art} in many financial
institutions because of their superior dynamic properties over pure local volatility.
Various sophisticated calibration techniques are in use in the financial industry, for example, based on the work of Guyon and Labordère \cite{Guyon2011} as well as Ren, Madan and Qian \cite{Ren&Madan2007}. 
However, practitioners are increasingly interested in taking into
account the quotes of touch and barrier options as well; the extra
information they embed can be valuable in obtaining arbitrage-free values
of exotic products with barrier features. 

A few published works already address this question from different angles and under different assumptions.
For example, in Crosby and Carr \cite{Carr&Crosby2008} a particular
class of models 
gives a calibration to both vanillas and barriers. 
Model-independent bounds on the price of double no-touch options were inferred from vanilla and digital option quotes in Cox and Obloj \cite{Cox2011}. Pironneau \cite{Pironneau} proves
that the Dupire equation is still valid for a given barrier level, in a local volatility setting.
The direct generalisation of this result to general stochastic volatility models appears not to be straightforward.

In our work, we approach the problem from the Brunick-Shreve
mimicking point of view \cite{BrunickShreve2013} in the general framework of continuous stochastic volatility models.
We derive a condition to be satisfied by
the expectation of the stochastic variance conditional on the spot
and its running maximum, $\left(S_{t},M_{t}\right)$, in order to reproduce
barrier prices. 
This conditional expectation is often referred to as a Markovian
projection.
For simplicity, we focus on up-and-out call options with no rebate and continuous monitoring of the barrier,
but the analysis extends to other payoff types. In the case of double barrier options, one would also have to consider the running minimum leading to an additional dimension and boundary term.

The derivation is inspired by the work of Derman \& Kani \cite{DermanKani1996} and Dupire \cite{Dupire, Dupire2009}.
In that context,
Gyöngy's mimicking result \cite{Gyongy1986} provides the financial engineer with a recipe
to build a low-dimensional Markovian
process which reproduces exactly any marginal density of the spot price process
for all times $t$. Brunick and Shreve \cite{BrunickShreve2013} extend Gyöngy's result to path functionals of the underlying process, of which the running maximum is a special case. In particular, they prove for any stochastic volatility process the existence of a mimicking low-dimensional Markovian process with the same joint density for $\left(S_{t},M_{t}\right)$.

In \cite{Forde2013}, Forde presents a way
to retrieve the mimicking coefficient by deriving a forward equation
for the characteristic function of $\left(S_{t},M_{t}\right)$; computation
of the coefficient is possible via an inverse two-dimensional Fourier-Laplace
transform. Here, we present an alternative method to retrieve the mimicking
coefficient, which lies closer to the well-known Dupire formula and can hence
benefit from the earlier work in the field of vanilla calibration.

We derive a partial-integro differential equation (PIDE) in strike, barrier
level and maturity for up-and-out calls priced under
the Brunick-Shreve model. This forward differential equation has the
same useful properties (and drawbacks) as the Dupire forward PDE. As a consequence,
it can be used to price a set of up-and-out calls in one single resolution.
In this regard, our method shares some similarities with the forward equations
derived by Carr and Hirsa \cite{Carr&Hirsa2007}. We highlight a few key differences though.
First, we consider the class of stochastic volatility models rather than  local volatility models
with a jump term. 
In working directly with Brunick and Shreve's
Markovian projection onto $\left(S_{t},M_{t}\right)$,
we need to consider the running maximum explicitly in our derivation.
This complicates the proof and makes the idea used in \cite{Carr&Hirsa2007} --
employing stopping times -- not applicable in our case.
For the above type of Markovian projection, one gets an unusual ``integro'' term in the PIDE involving a second derivative, which requires particular care in the numerical solution.
In contrast, the ``integro'' term in \cite{Carr&Hirsa2007} comes from the jump process and is not treated in the same way as ours.

A very recent paper by Guyon \cite{Guyon2014}
explains how path-dependant volatility models, like the Brunick-Shreve one,
may be very useful to replicate a market's spot-volatility dynamics,
in particular highlighting the running maximum.
It is important to note, though, that the diffusion of interest can be a
fairly general stochastic process in our framework.
More specifically, the variance process does not need to contain 
the running maximum in its parametrisation, and we do not particularly advocate the dynamic use of such a model here.
Indeed, by doing so, one may find oneself with the logical conundrum that the
model for the underlying depends on the time of inception of the \emph{option}, i.e., the time the clock starts for the running maximum.
The view we take here is that the Brunick-Shreve projection is a
``code book'' (to borrow a term from \cite{carmona09}) for barrier option prices, to which other models may be calibrated. 
Additionally, the forward PIDE enables in principle the pricing over a wide set of up-and-out call deals, creating a possible efficient direct solver for the inverse problem, i.e., to retrieve model parameters from any desired model class via the projected volatility $\sigma\left(S_{t},M_{t},t\right)$. 

The remainder of this paper is organised as follows.
In Section \ref{sec:Setup-and-Main},
we introduce the modelling setup and hypotheses and derive, as our first main result, a forward equation (in terms of the maturity) for barrier option prices,
where the strike and barrier levels are spatial variables.
Next, in Section \ref{sec:A-Dupire-Type-Formula}, we derive
a Dupire-type formula for barrier options, leading to a known setup
for the reader familiar with vanilla calibration. In order to build the first step of a calibration routine to reprice up-and-out call options, we deduce a forward PIDE in Section \ref{sec:Forward-PIDE}
with better stability properties
and develop a numerical solution scheme.
Finally, in Section \ref{sec:Numerical-Results}, we show that the
forward PIDE and the ``classical'' backward pricing PDE agree on
the price of barrier options, which validates our approach.
Section \ref{sec:conclusions} concludes.

\section{Setup and Main Result\label{sec:Setup-and-Main}}

We consider a market with a risk-free, deterministic and possibly time-dependent short rate, $r(t)$ at time $t$, and continuously compounded deterministic dividend $q(t)$.
Then 
$D\left(t\right)=\exp(-{\int_{0}^{t}r\left(u\right)\, du})$ is the discount factor, and 
$Q\left(t\right)=\exp({\int_{0}^{t}q\left(u\right)\, du})$ is the dividend capitalisation.

We assume the existence of a filtered probability space ({\Large $\chi$}, $\mathcal{F},\left\{ \mathcal{F}_{t}\right\} _{t\geq0},\mathbb{Q})$
with a (not necessarily unique) risk-neutral measure $\mathbb{Q}$, 
under which the price process of a risky asset follows
\begin{equation}
\frac{dS_{t}}{S_{t}}=\left(r\left(t\right)-q\left(t\right)\right)\, dt+\alpha_{t}\, dW_{t},
\label{eq:Model Definition}
\end{equation}
where $W$ is a standard Brownian motion and $\alpha_{t}$ is a continuous and positive $\mathcal{F}_{t}$-adapted semi-martingale such that
\begin{equation}
\mathbb{E}^{\mathbb{Q}}\left[\int_0^t \alpha_u^2 S_u^2\ du\right] < \infty.
\label{eq:alpha_condition}
\end{equation}

A practically important example is $\alpha_{t} = f\left(S_{t},t\right)\sqrt{V_{t}}$ with a \emph{local volatility} function $f$ and a CIR process $V$, which is often referred to as a \emph{local stochastic volatility} (LSV) model.

In this paper, we consider an up-and-out call option on $S$ with continuously monitored barrier $B$, strike $K$, and maturity $T$.
The arbitrage-free price under $\mathbb{Q}$ is
\begin{eqnarray*}
C\left(K,B,T\right)=\mathbb{E^{\mathbb{Q}}}\left[D\left(T\right)\left(S_{T}-K\right)^{+}\mathbf{1}_{M_{T}<B}\right],
\end{eqnarray*}
where $\mathbf{1}_{A}$ is the indicator function of event $A$ and
\[
M_{t}=\underset{0\leq u\leq t}{\max}S_{u}
\]
the running maximum process of $S$.
Adaptations of our results to other types of barrier options (such as put payoffs, down-and-out barriers etc) are easily possible using a similar derivation to below.

We now derive the first main result which links the barrier option price to
the Markovian projection of the stochastic volatility (\ref{eq:Model Definition}) onto the spot and its running maximum $\left(S_{t},M_{t}\right){}_{t\geq0}$,
which we define by
\begin{eqnarray}
\label{markpro}
\sigma_{|S,M}^{2}\left(K,B,T\right)=\mathbb{E^{\mathbb{Q}}}\left[\alpha_{T}^{2}\mid S_{T}=K, M_{T}=B\right].
\end{eqnarray}

A recent result by Brunick and Shreve \cite{BrunickShreve2013}, which we apply to the process $\log\left(S_{t}\right)$, shows that under the 
integrability condition (\ref{eq:alpha_condition}) on 
$\alpha$, the function $\sigma_{|S,M}$ is measurable, and gives the existence of a one-dimensional ``mimicking'' process $\widehat{S}$ 
with running maximum $\widehat{M}$,
such that for all $t$
\begin{eqnarray}
(S_t, M_t) \stackrel{law}{=} (\widehat{S}_t,\widehat{M}_t),
\label{eq:brunickSMinLaw}
\end{eqnarray}
where $\widehat{S}$ is the weak solution of
\begin{eqnarray}
\label{eqn:mimick}
\frac{d\widehat{S}_{t}}{\widehat{S}_{t}}=\left(r\left(t\right)-q\left(t\right)\right)\, dt+\sigma\left(\widehat{S}_{t},\widehat{M}_{t},t\right)\, d\widehat{W}_{t},
\label{eq:brunickMimickingModel}\end{eqnarray}
with a standard Brownian motion $\widehat{W}$ on a suitable probability space,
and
\begin{eqnarray*}
\sigma(K,B,T) = \sigma_{|S,M}(K,B,T).
\end{eqnarray*}

\begin{rem*}
We can take two views of (\ref{eqn:mimick}) from the perspective of derivative pricing. First, we can think of $\sigma$ in (\ref{eqn:mimick}) as a model in its own right, with a local volatility which additionally depends on the running maximum -- see \cite{Guyon2014} for a discussion of ``path-dependent'' volatility models. This extra flexibility of matching the path-dependence of the local volatility allows calibration to barrier contracts as well as vanillas, yet keeps the market complete.

Second, and more common, a different volatility model may be used, for instance the aforementioned LSV model. In that case, 
the mimicking Brunick-Shreve model will give the same barrier option prices as the
higher-dimensional diffusion as barrier option values are characterised precisely by the joint distribution of spot and running maximum.
This property is important for calibration purposes as we will describe later.
In either case, $\sigma$ plays a similar role for barrier options as the Dupire local volatility does for vanillas.
\end{rem*}

\begin{assn}
\label{assumption1}
We assume that $\sigma$ in (\ref{eqn:mimick}) is bounded and $C^{2,1,0}(\Omega\times \mathbb{R}^{+})$. This implies that $(\widehat{S},\widehat{M})$ is Markovian (see \cite{Brunick2011}; bounded and locally Lipschitz is sufficient).
We also assume that $(S_t,M_t)$ -- or, equivalently,  $(\widehat{S}_t,\widehat{M}_t)$ -- has a
twice differentiable transition density $\phi$ under $\mathbb{Q}$, with bounded derivatives.
\end{assn}

 \begin{prop}
 \label{prop:kfe}
Under Assumption \ref{assumption1}, the density $\phi$ is the classical solution of the following Kolmogorov forward equation
in the region $\Omega = \{(x,y): 0<x<y, S_0<y\}$, 
\begin{eqnarray}
\label{eqn:kfe}
\frac{\partial \phi}{\partial t} + (r(t)-q(t)) \frac{\partial}{\partial x} (x \phi) - \frac{1}{2} \frac{\partial^2}{\partial x^2}\left( \sigma^2 x^2 \phi \right) &=& 0, \qquad \qquad \qquad\;
(x,y) \in \Omega, \; t>0, \\
\label{eqn:kfebc}
\frac{\partial}{\partial x}\left( \sigma^2 x^2 \phi \right)  + \frac{1}{2} \frac{\partial}{\partial y}\left( \sigma^2 x^2 \phi \right) &=& (r(t)-q(t))\  x\  \phi,
\qquad x=y, \; t>0, \!\! \\
\label{zeroSzero}
\phi(S_0,S_0,t) &=& 0, \qquad \qquad \qquad \qquad \qquad \;\;\;\;\;\; t>0,\\
\nonumber
\phi(x,y,0) &=& \delta(x-S_0) \delta(y-S_0), \quad \quad (x,y) \in \Omega.
\end{eqnarray}
\end{prop}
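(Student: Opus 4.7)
My plan is to derive the Kolmogorov forward equation together with its boundary conditions in weak form from It\^o's formula, and then pass to the strong form by integration by parts with a suitable choice of test function.

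First, I will fix a smooth test function $\varphi$ on $\overline{\Omega}$ with bounded derivatives and apply It\^o's formula to $\varphi(\widehat{S}_t,\widehat{M}_t)$. Since $\widehat{M}$ is of finite variation (non-decreasing), the only $y$-derivative contribution is $\varphi_y\,d\widehat{M}$, so
\[
d\varphi(\widehat{S}_t,\widehat{M}_t) = \mathcal{L}\varphi\,dt + \varphi_y\,d\widehat{M}_t + dN_t,
\]
with $\mathcal{L}\varphi = (r-q)x\varphi_x + \tfrac{1}{2}\sigma^2 x^2 \varphi_{xx}$ and $N$ a local martingale. Taking $\mathbb{Q}$-expectations, using \eqref{eq:brunickSMinLaw} to rewrite the $(S,M)$-averages via the density $\phi$, and applying the pathwise identity $\int_0^t g(\widehat{M}_s)\,d\widehat{M}_s = \int_{S_0}^{\widehat{M}_t} g(y)\,dy$, I will obtain the weak relation
\[
\iint_\Omega \varphi\,\phi(\cdot,t)\,dx\,dy = \varphi(S_0,S_0) + \int_0^t\!\iint_\Omega \mathcal{L}\varphi\,\phi\,dx\,dy\,ds + \int_{S_0}^\infty \varphi_y(y,y)\,\mathbb{Q}(\widehat{M}_t\geq y)\,dy.
\]

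Next, I will differentiate in $t$ (justified by the smoothness and boundedness in Assumption~\ref{assumption1}) and integrate by parts in $x$ on the $\mathcal{L}\varphi$-term. Testing first against $\varphi$ with compact support in the interior of $\Omega$ kills the diagonal and $d\widehat{M}$ contributions and produces $\phi_t = \mathcal{L}^\ast\phi$ in $\Omega$, which is \eqref{eqn:kfe}. To reach \eqref{eqn:kfebc}, I then allow $\varphi$ to be non-zero on $\{x=y\}$: the two $x$-integrations by parts yield diagonal boundary terms proportional to $\varphi(y,y)$ and $\varphi_x(y,y)$, while the $d\widehat{M}$ contribution is proportional to $\varphi_y(y,y)$. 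The key structural observation is that on the diagonal $\varphi(y,y)$ and $\varphi_y(y,y)$ can be prescribed as independent smooth functions of $y$, while $\varphi_x(y,y) = \tfrac{d}{dy}\varphi(y,y) - \varphi_y(y,y)$ is then determined. A further integration by parts in $y$ converts the $\varphi_x(y,y)$-term into integrals against $\varphi(y,y)$ and $\varphi_y(y,y)$ only; matching coefficients then delivers \eqref{eqn:kfebc} from the $\varphi(y,y)$-coefficient and the consistency identity $\partial_t\mathbb{Q}(\widehat{M}_t\geq y) = \tfrac{1}{2}\sigma^2(y,y,t)\,y^2\,\phi(y,y,t)$ from the $\varphi_y(y,y)$-coefficient.

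The condition \eqref{zeroSzero} will emerge as the vanishing boundary term at $y = S_0$ in that last $y$-integration by parts, consistent with the fact that for $t>0$ the law of $\widehat{M}_t$ is supported strictly above $S_0$; the initial condition is immediate from $(\widehat{S}_0,\widehat{M}_0) = (S_0,S_0)$. I expect the main obstacle to be the careful bookkeeping in the diagonal integration by parts --- in particular, recognising the dependence $\varphi_x(y,y)+\varphi_y(y,y) = \tfrac{d}{dy}\varphi(y,y)$, which reduces what naively look like three independent boundary constraints to the single relation \eqref{eqn:kfebc} --- together with the justification of the interchange of $\partial_t$ and spatial integration, which is underwritten by the smoothness and boundedness in Assumption~\ref{assumption1}.
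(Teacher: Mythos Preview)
Your proposal is correct and follows essentially the same route as the paper: apply It\^o's formula to a test function of $(S_t,M_t)$, take expectations, introduce the density, differentiate in $t$, integrate by parts in $x$, and then vary the class of test functions to isolate first the interior equation and then the diagonal boundary condition, using precisely the identity $\varphi_x(y,y)+\varphi_y(y,y)=\tfrac{d}{dy}\varphi(y,y)$. The only cosmetic differences are that the paper handles the $dM$ contribution by restricting to test functions depending on $x$ alone (so that both $\varphi_y$-terms vanish simultaneously), whereas you keep general $\varphi$, rewrite $\mathbb{E}\bigl[\int_0^t\varphi_y\,dM\bigr]$ via the pathwise change-of-variable, and read off an additional consistency identity for $\partial_t\mathbb{Q}(\widehat{M}_t\ge y)$ from the $\varphi_y(y,y)$-coefficient; this extra identity is correct but not needed for the proposition.
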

For completeness, we give a derivation of these equations 
in Appendix \ref{sec:Derivation-of-the-Kolmogorov-Equations}.

This initial-boundary value problem is the adjoint to the backward equation satisfied by the option value $C$ as a function of $S_t$, $M_t$ and $t$, which has a homogeneous Neumann boundary condition on the diagonal $S=M$.

\begin{rem*}
The assumption of smoothness of the joint density is non-trivial. For the Black-Scholes model, with constant $\sigma$, the joint density function is known explicitly and smooth. In a recent work, \cite{Forde2013a} show the existence of the joint density -- not necessarily differentiable -- of an It{\^o} process $X$ and its running minimum $\underline{X}$, where the local volatility is a sufficiently smooth and suitably bounded function of $X_t$ and $\underline{X}_t$.
We also note that although the density features in our proof, it does not appear in the final result itself, and we conjecture that the regularity assumption may be weakened. Nonetheless, forward equations involving the transition density are widely used for calibration in practice and we envisage that (\ref{eqn:kfe}) can also be a useful building block for calibration in the present framework. Therefore, it seems reasonable to require that the model has enough regularity for the density to satisfy such an equation.
\end{rem*}

The time zero value of barrier options can then be written as
\begin{eqnarray}
\nonumber
C(K,B,T) &=& 
D(T) \; \int_{S_{0}}^\infty \int_0^\infty (x-K)^+ \mathbf{1}_{y<B} \, \phi(x,y,T) \dx \dy \\
&=& D(T) \; \int_{\KuS}^B \int_K^y (x-K) \, \phi(x,y,T) \dx \dy.
\label{eqn:intcall}
\end{eqnarray}

We can now extend the argument from Dupire \cite{Dupire} for European calls to derive a forward equation for barriers.
\begin{thm}
\label{thm:MainResult}
Under (\ref{eq:alpha_condition}) and Assumption \ref{assumption1}, 
the value $C(K,B,T)$ of an up-and-out barrier call satisfies, for all $0<K<B$ and $T>0$,
{\small
\begin{eqnarray}
\nonumber
\frac{\partial^{2}C\left(K,B,T\right)}{\partial B\partial T}+\left(r\left(T\right)-q\left(T\right)\right)K\frac{\partial^{2}C\left(K,B,T\right)}{\partial K\partial B} & = & \frac{1}{2}\sigma_{|S,M}^{2}\left(K,B,T\right)K^{2}\frac{\partial^{3}C\left(K,B,T\right)}{\partial K^{2}\partial B}-q\left(T\right)\frac{\partial C\left(K,B,T\right)}{\partial B} \vspace{0.1 cm} \\
 &  & -\frac{1}{2}\frac{\D}{\D B} \left(\left(B-K\right) B^{2}\sigma_{|S,M}^{2}\left(B,B,T\right)\left.\frac{\partial^{3}C\left(K,B,T\right)}{\partial K^{2}\partial B}\right\rfloor _{K=B}\right)\,,
\label{eq:MainResult}
\end{eqnarray}
}
where $\sigma_{|S,M}$ is given by (\ref{markpro}).
\end{thm}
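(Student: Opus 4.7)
My plan is to follow Dupire's original argument, applied to the representation of $C(K,B,T)$ as a double integral against the joint density $\phi$ given in \eqref{eqn:intcall}, and to use the Kolmogorov forward equation of Proposition~\ref{prop:kfe} to trade time derivatives for spatial ones.

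First I would differentiate \eqref{eqn:intcall} with respect to $B$, noting that for $K<B$ the outer limit $K\vee S_0$ is inactive, to obtain
\[
\frac{\partial C}{\partial B}=D(T)\int_{K}^{B}(x-K)\,\phi(x,B,T)\,dx,
\]
and analogously read off
\[
\frac{\partial^{2}C}{\partial K\partial B}=-D(T)\!\int_{K}^{B}\!\phi(x,B,T)\,dx,\qquad
\frac{\partial^{3}C}{\partial K^{2}\partial B}=D(T)\,\phi(K,B,T).
\]
These identities are the dictionary between the density and the mixed derivatives of $C$, and in particular give $D(T)\phi(B,B,T)=\left.\partial_K^2\partial_B C\right|_{K=B}$.

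Second, I would differentiate $\partial_B C$ once more in $T$, producing a $-r(T)\partial_B C$ piece together with $D(T)\int_K^B(x-K)\,\partial_T\phi\,dx$. On the $\partial_T\phi$ term I substitute \eqref{eqn:kfe} and integrate by parts in $x$: once for the drift term (picking up a boundary value at $x=B$ and an $\int x\phi\,dx$ bulk term, which I rewrite via $\partial_B C$ and $\partial_K\partial_B C$), and twice for the diffusion term (picking up $\sigma^{2}x^{2}\phi$ at $x=K$, which becomes $\sigma^{2}(K,B,T)K^{2}\partial_K^2\partial_B C$, a symmetric boundary contribution at $x=B$, and a term $(B-K)\partial_x(\sigma^{2}x^{2}\phi)|_{x=B}$). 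The $-r(T)\partial_B C$ combines with the $(r-q)\partial_B C$ drift remainder to leave only $-q(T)\partial_B C$; the $(r-q)K\partial^2_{KB}C$ piece on the left of \eqref{eq:MainResult} arises from the rewrite of $\int_K^B x\phi\,dx$.

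The main obstacle, and the step that requires the boundary condition \eqref{eqn:kfebc}, is handling the leftover term
$(B-K)\partial_x(\sigma^{2}x^{2}\phi)|_{x=y=B}$ together with the isolated $\tfrac{1}{2}\sigma^{2}(B,B,T)B^{2}\phi(B,B,T)$. Using \eqref{eqn:kfebc} along the diagonal I substitute
\[
\left.\partial_x(\sigma^{2}x^{2}\phi)\right|_{x=y=B}
=(r(T)-q(T))B\,\phi(B,B,T)-\tfrac{1}{2}\left.\partial_y(\sigma^{2}x^{2}\phi)\right|_{x=y=B}.
\]
Setting $g(B):=B^{2}\sigma^{2}(B,B,T)D(T)\phi(B,B,T)=\left.\partial_K^2\partial_B C\right|_{K=B}\cdot B^{2}\sigma^{2}(B,B,T)$ and applying the chain rule, $g'(B)=\left[\partial_x+\partial_y\right](\sigma^{2}x^{2}D(T)\phi)\big|_{x=y=B}$, the boundary relation rearranges precisely to
\[
g'(B)=(r(T)-q(T))B\,\tfrac{\partial^{3}C}{\partial K^{2}\partial B}\Big|_{K=B}
+\tfrac{1}{2}D(T)\,\partial_y(\sigma^{2}x^{2}\phi)\big|_{x=y=B}.
\]
This is exactly the combination left over on the right-hand side of my identity, so the collected boundary terms assemble into $-\tfrac{1}{2}\tfrac{d}{dB}[(B-K)g(B)]$, matching \eqref{eq:MainResult}. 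Finally I divide through by $D(T)$ and replace $\sigma(K,B,T)$ by $\sigma_{|S,M}(K,B,T)$ using Brunick--Shreve. The delicate points, worth verifying explicitly, are the legitimacy of differentiating under the integral (guaranteed by Assumption~\ref{assumption1} on the density) and the careful bookkeeping of the two distinct uses of the diagonal boundary condition (once to rewrite $\partial_x(\sigma^2x^2\phi)$, once to recognise the total $B$-derivative).
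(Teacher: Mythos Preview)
Your proposal is correct and follows essentially the same route as the paper: differentiate the integral representation \eqref{eqn:intcall} in $B$ and $T$, substitute the forward equation \eqref{eqn:kfe} for $\partial_T\phi$, integrate by parts in $x$, and then use the diagonal boundary condition \eqref{eqn:kfebc} together with the chain rule for $\tfrac{\D}{\D B}\bigl(\sigma^{2}(B,B,T)B^{2}\phi(B,B,T)\bigr)$ to recognise the remaining boundary contributions as the total $B$-derivative appearing in \eqref{eq:MainResult}. Two cosmetic remarks: there is no division by $D(T)$ at the end (all $D(T)$ factors are already absorbed into the derivatives of $C$ via your dictionary), and the identification $\sigma=\sigma_{|S,M}$ via Brunick--Shreve is logically invoked at the outset (so that one may work with a density satisfying \eqref{eqn:kfe}--\eqref{eqn:kfebc}) rather than as a final relabelling.
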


\begin{proof}
Because of Brunick and Shreve's mimicking result, (\ref{eq:brunickSMinLaw})
and (\ref{eq:brunickMimickingModel}), and the smoothness assumption,
we can work with a density satisfying the forward equation (\ref{eqn:kfe})
and (\ref{eqn:kfebc}). We denote $\mu\left(t\right) = r\left(t\right)-q\left(t\right)$. Differentiating (\ref{eqn:intcall}) with
respect to $B$ and $T$ in the first equation, using (\ref{eqn:kfe})
in the second, and integrating by parts in the third, we obtain{\small{}
	\begin{eqnarray*}
		\frac{\partial^{2}C\left(K,B,T\right)}{\partial T\partial B} & = & -r\left(T\right)\frac{\partial C\left(K,B,T\right)}{\partial B}+D\left(T\right)\int_{K}^{B}(x-K)\frac{\partial\phi}{\partial t}(x,B,T)\dx\\
		& = & -r\left(T\right)\frac{\partial C\left(K,B,T\right)}{\partial B}+ \\ 
		&& D\left(T\right)\int_{K}^{B}(x-K)\left[-\mu\left(T\right)\frac{\partial}{\partial x}\left(x\phi(x,B,T)\right)+\frac{1}{2}\frac{\partial^{2}}{\partial x^{2}}\left(\sigma^{2}(x,B,T)x^{2}\phi(x,B,T)\right)\right]\dx\\
		& = & -r\left(T\right)\frac{\partial C\left(K,B,T\right)}{\partial B}+\\
		&  & \mu\left(T\right)D\left(T\right)\left[-B\left(B-K\right)\phi\left(B,B,T\right)+\int_{K}^{B}\left(x-K\right)\phi(x,B,T)\dx+K\int_{K}^{B}\phi(x,B,T)\dx\right]+\\
		&  & \frac{1}{2}D\left(T\right)\left(\sigma^{2}(K,B,T)K^{2}\phi(K,B,T)-\sigma^{2}(B,B,T)B^{2}\phi(B,B,T)\right)\ + \\
		&& \frac{1}{2}D\left(T\right)(B-K)\left[\frac{\partial}{\partial x}\left(\sigma^{2}(x,B,T)x^{2}\phi(x,B,T)\right)\right]_{x=B}.
	\end{eqnarray*}
} Using the boundary condition (\ref{eqn:kfebc}) in the second line
of {\small{}
	\begin{eqnarray*}
		\frac{{\rm d}}{{\rm d}B}\left(\sigma^{2}(B,B,T)B^{2}\phi(B,B,T)\right) & = & \left[\frac{\partial}{\partial x}\left(\sigma^{2}(x,B,T)x^{2}\phi(x,B,T)\right)\right]_{x=B}+\left[\frac{\partial}{\partial B}\left(\sigma^{2}(x,B,T)x^{2}\phi(x,B,T)\right)\right]_{x=B}\\
		& = & -\left[\frac{\partial}{\partial x}\left(\sigma^{2}(x,B,T)x^{2}\phi(x,B,T)\right)\right]_{x=B}+2\mu\left(T\right)B\phi\left(B,B,T\right),
	\end{eqnarray*}
} we get, by another application of the product rule, {\small{}
\begin{eqnarray*}
	\frac{{\rm d}}{{\rm d}B}\left((B-K)\sigma^{2}(B,B,T)B^{2}\phi(B,B,T)\right)&=&-(B-K)\left[\frac{\partial}{\partial x}\left(\sigma^{2}(x,B,T)x^{2}\phi(x,B,T)\right)\right]_{x=B}+ \\ && 2\mu\left(T\right)(B-K)B\phi\left(B,B,T\right) 
	+\sigma^{2}(B,B,T)B^{2}\phi(B,B,T).
\end{eqnarray*}
} The result now follows by differentiating (\ref{eqn:intcall}) once
more to obtain {\small{}
	\begin{eqnarray}
	\nonumber
		\frac{\partial^{2}C\left(K,B,T\right)}{\partial K\partial B}&=&-D\left(T\right)\int_{K}^{B}\phi\left(x,B,T\right)\dx, \\
			\label{eq:Barrier-Prices-Links-With-Density}
		\frac{\partial^{3}C\left(K,B,T\right)}{\partial K^{2}\partial B}&=&D\left(T\right)\phi\left(K,B,T\right),		
	\end{eqnarray}
} and substituting everything above.
\end{proof}

\begin{rem*}
We can think of (\ref{eq:MainResult}) in two ways. Given a model, either via $\sigma$ directly or via a specification that allows computation of the Markovian projection, the forward equation allows computation of barrier option prices across all strikes, maturities and barriers as the solution of a single PDE.
Conversely, if barrier call prices are observed on the market, (\ref{eq:MainResult}) allows inferences on the Markovian projection (\ref{markpro}) of the volatility. As is the case with the Dupire formula for European calls, a continuum of prices is not available and some sort of interpolation is required and can be notoriously unstable. We will return to these points in depth in the following sections.
\end{rem*}

For future reference, we define
\[
\widetilde{C}\left(K,B,T\right)=\mathbb{E^{\mathbb{Q}}}\left[D\left(T\right)Q\left(T\right)\left(S_{T}-K\right)^{+}\mathbf{1}_{M_{T}<B}\right],
\]
the capitalisation of the market price with dividends. 
By doing so, we eliminate the term $q\left(T\right)\frac{\partial C\left(K,B,T\right)}{\partial B}$ if we replace
$C\left(K,B,T\right)$ by $\widetilde{C}\left(K,B,T\right)$ in (\ref{eq:MainResult})$ $.
We will work with $\widetilde{C}$ in the remainder of the article.

\section{A Dupire-Type Formula for Barrier Options\label{sec:A-Dupire-Type-Formula}}

The program we aim to complete is summarized in Fig.~\ref{fig:summary}.
\begin{figure}
\fbox{
\begin{tabular}{ccccc}
&& Gy{\"o}ngy \cite{Gyongy1986} && Dupire \cite{Dupire}   \\
&& $\downarrow$ && $\downarrow$  \\
$\sigma_{|S}^{2}\left(K,t\right)$  &=& $\mathbb{E^{\mathbb{Q}}}\left[\alpha_{t}^{2}\mid S_{t}=K\right]$
&=& $\frac{\frac{\partial C}{\partial T}-\left(r-q\right)\left(C-K\frac{\partial C}{\partial K}\right)}{\frac{1}{2}K^{2}\frac{\partial^{2}C}{\partial K^{2}}}$ 
\vspace{0.4 cm} \\ 
$\sigma_{|S,M}^{2}\left(K,B,t\right)$ &=&
$\mathbb{E^{\mathbb{Q}}}\left[\alpha_{t}^{2}\mid S_{t}=K\cap M_{t}=B\right]$
& = &
(\ref{eqn:bs-mimick}) \\
&& $\uparrow$ &&   \\
&& Brunick \& Shreve \cite{BrunickShreve2013} &&
\end{tabular}
}
\caption{Analogy of results for European and barrier options.}
\label{fig:summary}
\end{figure}
Equation (\ref{eq:MainResult}) cannot be solved for the Markovian projection $\sigma_{|S,M}$ directly, due to the presence of the 
term $\sigma_{|S,M}(B,B,T)$.
We derive a formula for $\sigma_{|S,M}$ as follows.

At $K=0$, (\ref{eq:MainResult}) still holds and integration with respect to $B$ gives
\begin{equation}
\frac{\partial\widetilde{C}\left(0,B,T\right)}{\partial T}=-\frac{1}{2}\sigma_{|S,M}^{2}\left(B,B,T\right)B^{3}\left.\frac{\partial^{3}\widetilde{C}\left(K,B,T\right)}{\partial K^{2}\partial B}\right\rfloor _{K=B}\quad \forall\left(B,T\right)\in\left]S_{0},+\infty\right[\times\mathbb{R}_{*}^{+}, \label{eq:ForNT forward equation}
\end{equation}
noticing that because of (\ref{eq:Barrier-Prices-Links-With-Density}) and (\ref{zeroSzero}) no integration constant appears.
It links the price of the foreign no-touch option,
\begin{eqnarray}
\label{eqn:fnt}
FNT\left(B,T\right)=D\left(T\right)\mathbb{E^{\mathbb{Q}}}\left[S_{T}\mathbf{1}_{M_{T}<B}\right]=\frac{\widetilde{C}\left(0,B,T\right)}{Q\left(T\right)}\,,
\end{eqnarray}
to the market-implied joint-density of $\left(S_{t},M_{t}\right)$ at $K=B$; see (\ref{eq:Barrier-Prices-Links-With-Density}).
If we substitute (\ref{eq:ForNT forward equation})
into the last term of (\ref{eq:MainResult}), we get
{\small
\[
\frac{\partial^{2}\widetilde{C}\left(K,B,T\right)}{\partial B\partial T}+\left(r\left(T\right)-q\left(T\right)\right)K\frac{\partial^{2}\widetilde{C}\left(K,B,T\right)}{\partial K\partial B}=\frac{1}{2}\sigma_{|S,M}^{2}\left(K,B,T\right)K^{2}\frac{\partial^{3}\widetilde{C}\left(K,B,T\right)}{\partial K^{2}\partial B}+\frac{\partial\left(\frac{\left(B-K\right)^{+}}{B}\frac{\partial\widetilde{C}\left(0,B,T\right)}{\partial T}\right)}{\partial B}\,.
\]
}
Therefore, as $B\geq K$ in the case of interest,
{\small
\begin{eqnarray}
\frac{\partial^{2}\widetilde{C}\left(K,B,T\right)}{\partial B\partial T}-\frac{K}{B^{2}}\frac{\partial\widetilde{C}\left(0,B,T\right)}{\partial T}-\frac{\left(B-K\right)}{B}\frac{\partial^{2}\widetilde{C}\left(0,B,T\right)}{\partial T\partial B}+\left(r\left(T\right)-q\left(T\right)\right)K\frac{\partial^{2}\widetilde{C}\left(K,B,T\right)}{\partial K\partial B} & =\label{eq:Dupire-type equation reworked with DDT}\\
\frac{1}{2}\sigma_{|S,M}^{2}\left(K,B,T\right)K^{2}\frac{\partial^{3}\widetilde{C}\left(K,B,T\right)}{\partial K^{2}\partial B}\,.\nonumber 
\end{eqnarray}
}
By rearrangement of (\ref{eq:Dupire-type equation reworked with DDT}) we get the following Dupire-type formula.

\begin{cor}
Under the assumptions of Theorem \ref{thm:MainResult},
the unique Brunick-Shreve mimicking volatility for up-and-out call options is
{\small
\begin{eqnarray}
\label{eqn:bs-mimick}
\sigma_{|S,M}\left(K,B,T\right) \;=\; \sqrt{\cfrac{\frac{\partial^{2}\widetilde{C}\left(K,B,T\right)}{\partial B\partial T}-\frac{K}{B^{2}}\frac{\partial\widetilde{C}\left(0,B,T\right)}{\partial T}-\frac{\left(B-K\right)}{B}\frac{\partial^{2}\widetilde{C}\left(0,B,T\right)}{\partial T\partial B}+\left(r\left(T\right)-q\left(T\right)\right)K\frac{\partial^{2}\widetilde{C}\left(K,B,T\right)}{\partial K\partial B}}{\frac{1}{2}K^{2}\frac{\partial^{3}\widetilde{C}\left(K,B,T\right)}{\partial K^{2}\partial B}}},&& \\
 0\leq K\leq B,\, T\geq0\,.&&
 \nonumber
\end{eqnarray}
}
\end{cor}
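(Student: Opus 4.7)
The plan is to treat this as an algebraic rearrangement of Theorem~\ref{thm:MainResult}, in the spirit of Dupire's derivation for vanillas. The obstacle to reading off $\sigma_{|S,M}(K,B,T)$ directly from (\ref{eq:MainResult}) is that the final term on the right-hand side involves the ``diagonal'' value $\sigma_{|S,M}(B,B,T)$, so the equation is not yet solvable pointwise. The key idea is to obtain a second, independent relation that determines this diagonal quantity in terms of observable derivatives of prices, and then to substitute back.

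First, I would pass from $C$ to the capitalised price $\widetilde{C}\left(K,B,T\right)=Q(T)C(K,B,T)$ noted in the remark preceding Section~\ref{sec:A-Dupire-Type-Formula}, so that the $q(T)\,\partial_B C$ term disappears and (\ref{eq:MainResult}) holds verbatim with $C$ replaced by $\widetilde{C}$.

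Next, I would evaluate the resulting identity at $K=0$. The left-hand side reduces and the right-hand side loses the $\sigma_{|S,M}^2(K,B,T)K^2\partial^3_{KKB}\widetilde{C}$ term (it vanishes at $K=0$), so one is left with an ordinary differential relation in $B$ between $\partial^2_{BT}\widetilde{C}(0,B,T)$, $\partial^2_{KB}\widetilde{C}(0,B,T)$, and the boundary expression $\frac{\D}{\D B}\left((B-K)B^2\sigma_{|S,M}^2(B,B,T)\partial^3_{KKB}\widetilde{C}|_{K=B}\right)$ at $K=0$. Integrating this once in $B$, using (\ref{zeroSzero}) together with the density representation (\ref{eq:Barrier-Prices-Links-With-Density}) to kill the constant of integration at $B=S_0$, produces (\ref{eq:ForNT forward equation}), which identifies the ``diagonal'' term purely through $\partial_T \widetilde{C}(0,B,T)$, i.e., the time-derivative of the foreign no-touch price (\ref{eqn:fnt}).

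Finally, I would substitute (\ref{eq:ForNT forward equation}) into the last term of the $\widetilde{C}$-version of (\ref{eq:MainResult}). The factor $(B-K)^{+}/B$ reproduces correctly on the region $K\le B$ of interest, and an application of the product rule in $B$ yields (\ref{eq:Dupire-type equation reworked with DDT}), which is now linear in the single unknown $\sigma_{|S,M}^2(K,B,T)$. Solving for it and taking the positive square root gives (\ref{eqn:bs-mimick}). The one point that deserves a check is that the denominator $\tfrac{1}{2}K^2\partial^3_{KKB}\widetilde{C}(K,B,T)$ is non-vanishing on the relevant open region; by (\ref{eq:Barrier-Prices-Links-With-Density}) this is $\tfrac12 D(T)Q(T)K^2\phi(K,B,T)$, so the claim reduces to strict positivity of the transition density for $0<K<B$, $S_0<B$, which is standard under Assumption~\ref{assumption1}. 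Uniqueness is immediate: at each $(K,B,T)$ the formula returns a single value, so any Brunick--Shreve mimicking coefficient must agree with it almost everywhere.
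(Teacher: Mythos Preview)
Your proposal is correct and follows essentially the same route as the paper: pass to $\widetilde{C}$, evaluate (\ref{eq:MainResult}) at $K=0$ and integrate in $B$ to obtain (\ref{eq:ForNT forward equation}) (using (\ref{zeroSzero}) via (\ref{eq:Barrier-Prices-Links-With-Density}) to eliminate the integration constant), substitute back, expand with the product rule to reach (\ref{eq:Dupire-type equation reworked with DDT}), and rearrange. One minor imprecision: at $K=0$ the term $(r(T)-q(T))K\,\partial^2_{KB}\widetilde{C}$ vanishes outright, so it should not be listed among the surviving terms in your $K=0$ relation; also note that your justification of the non-vanishing denominator and the uniqueness remark go slightly beyond what the paper states explicitly.
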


This formula suffers from similar issues to those one finds with the evaluation of the standard Dupire formula,
in that derivatives of the value function over a continuum of strikes and maturities are required.
Practical approximations using the sparsely available data are necessarily sensitive to the method of interpolation. 

This is exacerbated here as (\ref{eq:MainResult}) involves derivatives up to order four. 
The formula (\ref{eqn:bs-mimick}) can be expected to be numerically somewhat more stable since we
replaced the higher-order derivative with respect to the strike at barrier level with first and second order derivatives at zero strike.
In addition to the numerical improvement, it is also easier in practice to retrieve barrier prices
at zero strike with foreign no-touch prices, recalling (\ref{eqn:fnt}),
for which quotes are often available (e.g., in the FX markets). 


\section{A Forward Partial-Integro Differential Equation For Barrier Options \label{sec:Forward-PIDE}}

It is well known that ill-posed parameter estimation
problems 
can often be regularised through a penalised optimisation routine,
a good example of which is the calibration of local volatility through Tikhonov regularisation as presented, e.g., by Crépey \cite{Crepey2010}, Egger and Engl \cite{Egger2005}, as well as Achdou and Pironneau \cite{Achdou2005}.

However, in order to achieve this, a suitable forward partial differential
equation is required.
We now propose a further rearrangement of (\ref{eq:MainResult}), which is more suited to the numerical computation of $C(K,B,T)$ taking the Markovian projection $\sigma_{|S,M}$ as input.\looseness=1

\subsection{Formulation as PIDE}

Equation (\ref{eq:MainResult}) can also be expressed in PIDE form
by integrating with respect to $B$. We start by integrating the diffusive
term of (\ref{eq:MainResult}),
{
\[
\begin{array}{ccc}
\int_{S_{0}\lor K}^{B}\frac{1}{2}\sigma_{|S,M}^{2}\left(K,b,T\right)K^{2}\frac{\partial^{3}\widetilde{C}\left(K,b,T\right)}{\partial K^{2}\partial b}\,\db & = & \!\!\!\!\!\!\!\!\!\!\!\!\!\!\!\!\!\! \frac{1}{2}\sigma_{|S,M}^{2}\left(K,B,T\right)K^{2}\frac{\partial^{2}\widetilde{C}\left(K,B,T\right)}{\partial K^{2}}\\
&& -\frac{1}{2}\sigma_{|S,M}^{2}\left(K,S_{0}\lor K,T\right)K^{2}\frac{\partial^{2}\widetilde{C}\left(K,S_{0}\lor K,T\right)}{\partial K^{2}} \\
&& -\int_{S_{0}\lor K}^{B}\frac{1}{2}K^{2}\frac{\partial^{2}\widetilde{C}\left(K,b,T\right)}{\partial K^{2}}\frac{\partial\sigma_{|S,M}^{2}\left(K,b,T\right)}{\partial b}\,\db.
\end{array}
\]
}
The term 
in the second line
vanishes for all $K$ and $T$. Indeed, when $S_{0} \geq K$ this option is already knocked-out. When $S_{0}<K$, 
the term
is also zero as explained below in ($\ref{eqn:gammaKB}$).
The other terms can all be directly integrated with respect to $B$
taking into account that, similarly, no integration constant will
appear. 
This allows us to define the following initial boundary value
problem.

\begin{cor}
Under the assumptions of Theorem \ref{thm:MainResult},
the up-and-out call price 
follows the Volterra-type PIDE, 
expressed as an initial boundary value problem,
{\small
\begin{eqnarray}
\label{eq:Volettera-Type-PIDE}
\frac{\partial\widetilde{C}\left(K,B,T\right)}{\partial T}+\left(r\left(T\right)-q\left(T\right)\right)K\frac{\partial\widetilde{C}\left(K,B,T\right)}{\partial K}
-\frac{1}{2}\sigma_{|S,M}^{2}\left(K,B,T\right)K^{2}\frac{\partial^{2}\widetilde{C}\left(K,B,T\right)}{\partial K^{2}} =  \hspace{1.5 cm} &&\\
\nonumber
- \frac{1}{2}\sigma_{|S,M}^{2}\left(B,B,T\right)B^{2}\left(B-K\right)\frac{\partial^{3}\widetilde{C}\left(B,B,T\right)}{\partial K^{2}\partial B} -\int_{S_{0}\lor K}^{B}\frac{1}{2}K^{2}\frac{\partial^{2}\widetilde{C}\left(K,b,T\right)}{\partial K^{2}}\frac{\partial\sigma_{|S,M}^{2}\left(K,b,T\right)}{\partial b}\, \db\,, \hspace{-1 cm} && \\ 
\left(K,B,T\right)\in\left[0,+\infty\right[\times\left]S_{0},+\infty\right[\times\mathbb{R}_{*}^{+},\nonumber \hspace{-1 cm} &&
\end{eqnarray}
\begin{eqnarray}
\label{eqn:ic}
\widetilde{C}\left(K,B,0\right) = \left(S_{0}-K\right)^{+}\mathbf{1}_{S_{0}<B}, &\quad& T=0, \\
\label{eqn:bcK}
\widetilde{C}\left(B,B,T\right) = 0, & \quad & K=B,\\
\label{eqn:bcB}
\widetilde{C}\left(K,S_{0},T\right) = 0, & \quad & B=S_{0}.
\end{eqnarray}
}
\end{cor}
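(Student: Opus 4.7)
The plan is to obtain (\ref{eq:Volettera-Type-PIDE}) by integrating the forward equation (\ref{eq:MainResult}), in the form appropriate for $\widetilde{C}$ so that the $-q\,\partial C/\partial B$ term is absorbed, with respect to $B$ from the lower limit $S_0\vee K$ up to $B$, and then rearranging terms. The integration is essentially term-by-term once every boundary contribution at $b=S_0\vee K$ is shown to vanish.

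For the diffusive term $\tfrac12 \sigma_{|S,M}^2(K,b,T) K^2\,\partial^3 \widetilde{C}(K,b,T)/\partial K^2\partial b$, I would integrate by parts in $b$ with $u=\tfrac12 \sigma_{|S,M}^2(K,b,T)K^2$ and $dv = \partial^3 \widetilde{C}/\partial K^2\partial b\,db$; this produces the upper boundary piece $\tfrac12 \sigma_{|S,M}^2(K,B,T) K^2\,\partial^2 \widetilde{C}(K,B,T)/\partial K^2$, the Volterra remainder involving $\partial \sigma_{|S,M}^2/\partial b$, and a lower boundary term at $b=S_0\vee K$. The two left-hand terms of (\ref{eq:MainResult}) telescope under the fundamental theorem of calculus into $\partial \widetilde{C}/\partial T$ and $\mu(T)K\,\partial \widetilde{C}/\partial K$ evaluated at $b=B$, minus their values at $b=S_0\vee K$. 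The last term of (\ref{eq:MainResult}) is already in $d/dB$ form and integrates directly to $-\tfrac12(B-K)B^2\sigma_{|S,M}^2(B,B,T)\,\partial^3 \widetilde{C}(K,B,T)/\partial K^2\partial B|_{K=B}$, again minus its value at $b=S_0\vee K$. Assuming all lower-boundary contributions vanish, moving the diffusive upper boundary piece to the left-hand side reproduces (\ref{eq:Volettera-Type-PIDE}).

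The main obstacle is verifying that every boundary contribution at $b=S_0\vee K$ disappears, which splits into two cases. When $S_0\geq K$, the lower limit is $S_0$ and, since $M_0=S_0$, the barrier is reached at inception so $\widetilde{C}(K',S_0,T)\equiv 0$ by (\ref{eqn:bcB}); this kills the $T$- and $K$-derivative boundaries and the integration-by-parts boundary, while for the total-derivative endpoint, (\ref{eq:Barrier-Prices-Links-With-Density}) identifies the relevant factor with $D(T)Q(T)\phi(S_0,S_0,T)$, which vanishes by (\ref{zeroSzero}). When $S_0<K$, the lower limit is $K$, and: the $T$-boundary vanishes since $\widetilde{C}(K,K,T)\equiv 0$ by (\ref{eqn:bcK}); the $\partial/\partial K$ boundary vanishes because $\partial \widetilde{C}(K',K,T)/\partial K'|_{K'=K} = -D(T)Q(T)\,\mathbb{Q}(S_T>K,\,M_T<K) = 0$; the integration-by-parts boundary vanishes by the forthcoming relation (\ref{eqn:gammaKB}), which via (\ref{eq:Barrier-Prices-Links-With-Density}) expresses $\partial^2 \widetilde{C}(K',K,T)/\partial K'^2|_{K'=K}$ as an integral of the joint density over a $B$-interval of zero length; and the total-derivative boundary vanishes trivially from the $(b-K)$ prefactor at $b=K$. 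The initial condition (\ref{eqn:ic}) is immediate from $D(0)=Q(0)=1$ and $M_0=S_0$, and the boundary conditions (\ref{eqn:bcK}) and (\ref{eqn:bcB}) have already been used in the vanishing arguments.
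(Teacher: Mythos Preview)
Your proposal is correct and follows essentially the same approach as the paper: integrate (\ref{eq:MainResult}) in $B$ from $S_0\vee K$ to $B$, handle the diffusive term by integration by parts, and check that all lower-endpoint contributions vanish. The paper only spells out the vanishing of the diffusive boundary term (appealing to knock-out when $S_0\ge K$ and to (\ref{eqn:gammaKB}) when $S_0<K$) and waves the remaining terms through with ``similarly, no integration constant will appear''; your case-by-case verification of the $T$-, $K$-, and total-derivative boundaries is more explicit but logically the same.
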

\begin{figure}[H]
\includegraphics[scale=0.23]{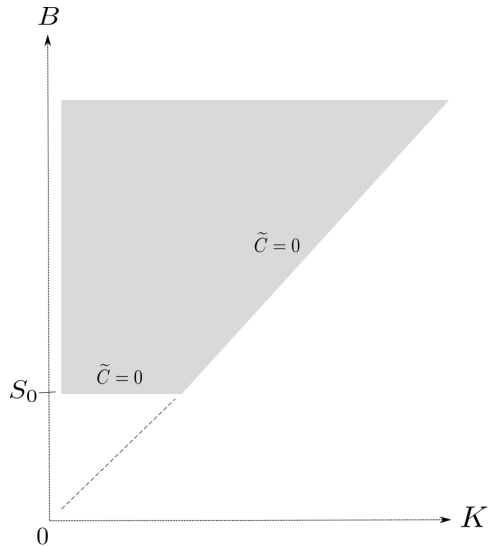}
\protect\caption{PIDE Domain and Boundaries}
\label{fig:domain}
\end{figure}

The initial condition (\ref{eqn:ic}) is the payoff obtained at maturity.
Condition (\ref{eqn:bcK}) expresses the fact that the option gets knocked out before being in-the-money if $K=B$, while (\ref{eqn:bcB}) says the option gets knocked out immediately at inception if $ S_0=B $.

\begin{rem*}
We note that in the models we will consider, no boundary condition needs to be specified at $K=0$, since the coefficients of the $K$-derivatives vanish sufficiently fast as $K\rightarrow 0$. This is the case, e.g., if $\sigma_{|S,M}^2$ and 
its $B$-derivative
are bounded when $K \rightarrow 0$.

In fact, a boundary condition at $K=0$ could be derived by differentiating (\ref{eqn:intcall}) twice,
\begin{eqnarray*}
\frac{\partial^{2}\widetilde{C}\left(0,B,T\right)}{\partial K^{2}} & = & D\left(T\right)Q\left(T\right)\int_{ S_{0}}^{B}\phi\left(0,y,T\right) \dy =D\left(T\right)Q\left(T\right)\mathbb{Q}\left(S_0<M_{T}<B\mid S_{T}=0\right)
\psi(0,T),
\end{eqnarray*}
where 
$\psi(0,T)$
is the density of $S_{T}$ at 0. The latter is equal to zero since the
density of the spot at zero is zero under the log-spot model in (\ref{eq:Model Definition}).
Linearity conditions, also known as ``Zero
Gamma'' for spot PDEs, are commonly used in finance and particularly useful for exotic contracts. They usually lead to stable numerical schemes (see Windcliff \cite{WIndcliff2004}).

We also note that
\begin{eqnarray}
\label{eqn:gammaKB}
\frac{\partial^{2}\widetilde{C}\left(B,B,T\right)}{\partial K^{2}}=D\left(T\right)Q\left(T\right)\mathbb{Q}\left(S_0<M_{T}<B\mid S_{T}=B\right) 
\psi(B,T) = 0,
\end{eqnarray}
because if the spot is equal to $B$ at $T$, then the probability
that the running maximum is smaller than $B$ is zero. While we
will use the Dirichlet condition on the boundary $K=B$ for computations,
(\ref{eqn:gammaKB}) allows us to express the third order cross derivative in the second line of (\ref{eq:Volettera-Type-PIDE}) by a third order derivative only with respect to $K$.
Indeed,
\[
0 = \frac{\D}{\D B}\left(\frac{\partial^{2}\widetilde{C}\left(B,B,T\right)}{\partial K^{2}}\right)=\frac{\partial^{3}\widetilde{C}\left(B,B,T\right)}{\partial K^{3}}+\frac{\partial^{3}\widetilde{C}\left(B,B,T\right)}{\partial K^{2}\partial B},
\]
leading to
\begin{eqnarray}\label{eqn:thirdder}
\frac{\partial^{3}\widetilde{C}\left(B,B,T\right)}{\partial K^{3}}=-\frac{\partial^{3}\widetilde{C}\left(B,B,T\right)}{\partial K^{2}\partial B}.
\end{eqnarray}
\end{rem*}

The domain sketched in Fig.~\ref{fig:domain} is unbounded for large $B$.
We denote $B_{Max}$ the maximum value of barrier levels we will consider in the numerical solution. Note that because of the structure of (\ref{eq:Volettera-Type-PIDE}),
no boundary condition is needed for $B=B_{Max}$.

\subsection{Finite Difference Approximation}
\label{sec:fwpricing}

We define a uniform mesh which contains $M+1$ time points, $N+1$
spatial points in the strike and $P+1$ in the barrier variable, leading to the
following definition of the step sizes:
\begin{eqnarray*}
\begin{array}{lclll}
K_{i} &=& i\Delta_{K}, &\Delta_{K}=\frac{B_{Max}-S_{0}}{N+1},& i\in\left\llbracket 0,N\right\rrbracket, \\
B_{j} &=& S_{0}+j\Delta_{B}, &\Delta_{B}=\frac{B_{Max}-S_{0}}{P+1},& j\in\left\llbracket 0,P\right\rrbracket, \\
T_{m} &=& m\Delta_{T}, &\Delta_{T}=\frac{T_{Max}}{M+1},& m\in\left\llbracket 0,M\right\rrbracket.
\end{array}
\end{eqnarray*}
For simplicity, we impose $\Delta_{K}=\Delta_{B}$.
This will ensure that for any $B_{j}$, the corresponding mesh line
will contain at least all $(B_{u},B_j)$ for all $u$ smaller than $j$, which is useful for the following algorithm.

We can identify an interesting property of (\ref{eq:Volettera-Type-PIDE}),
especially visible with the substitution (\ref{eqn:thirdder}),
in order to approximate the solution inductively on a discrete lattice.
At $B=S_{0}$, the solution
is known to be uniformly zero over the strike and time variable. Assume now an approximate solution is known up to a certain $B$. Moving from
$B$ to $B+\Delta_{B}$, and approximating the integral by a quadrature
rule, gives a PDE in time and strike at level $B+\Delta_{B}$, which
can be solved by finite differences. From now on, we will refer to the PDE
at a given barrier level $B_{j}$ as a ``PDE layer''. We can then solve the
PIDE for each layer at points $B_{j}$ from $S_{0}$ to $B_{Max}$.

We denote the discrete solution vector in such a layer by 
\[
\mathbf{u}_{.,j}^{m}=\begin{bmatrix}\widetilde{C}\left(K_{0},B_{j},T_{m}\right), \ldots,
\widetilde{C}\left(B_{j},B_{j},T_{m}\right)
\end{bmatrix}'
\]
of size $n_{j}=\frac{B_{j}-K_{0}}{\Delta_{K}}+1$, where $'$ denotes the transpose.
We also denote by $\mathbf{I}_{n}$ the identity matrix of size $n\times n$.

\subsubsection*{Derivatives}

Derivatives are approximated by centered finite differences at each space
point except at $K=0$ and $K=B_{j}$, where they are computed, respectively,
forward and backward. For the time being, the boundary conditions
are not taken into account. We assume equally spaced points and define
two operators as follows:
\begin{eqnarray}
\delta_{K}\mathbf{u}_{i,j}^{m} &=& \frac{\mathbf{u}_{i+1,j}^{m}-\mathbf{u}_{i-1,j}^{m}}{2\Delta_{K}}, \nonumber \\
\delta_{KK}\mathbf{u}_{i,j}^{m} &=&\frac{\mathbf{u}_{i+1,j}^{m}-2\mathbf{u}_{i,j}^{m}+\mathbf{u}_{i-1,j}^{m}}{\Delta_{K}^{2}}. \label{eq:PIDE_Derivatives_FD}
\end{eqnarray}
{} The usual matrix derivative operator can be defined for both the
first and second order derivative:
\[
\mathbf{D}=\frac{1}{\Delta_{K}}\begin{bmatrix}-1 & 1 & 0 & ... & 0\\
-\frac{1}{2} & 0 & \frac{1}{2} & \ddots & \vdots\\
0 & \ddots & \ddots & \ddots & 0\\
\vdots & \ddots & -\frac{1}{2} & 0 & \frac{1}{2}\\
0 & ... & 0 & -1 & 1
\end{bmatrix}, \quad
\mathbf{D_2} = \frac{1}{\Delta_{K}^{2}}\begin{bmatrix}1 & -2 & 1 & 0 & ... & 0\\
1 & -2 & 1 & 0 & \left(0\right) & \vdots\\
0 & \ddots & \ddots & \ddots & \ddots & \vdots\\
\vdots & \ddots & \ddots & \ddots & \ddots & 0\\
\vdots & \left(0\right) & 0 & 1 & -2 & 1\\
0 & ... & 0 & 1 & -2 & 1
\end{bmatrix}.
\]
We also define the forward time difference operator
\[
\delta_{T}\mathbf{u}_{i,j}^{m}=\frac{\mathbf{u}_{i,j}^{m+1}-\mathbf{u}_{i,j}^{m}}{\Delta T}.
\]

\subsubsection*{Integral}
The integral term will be computed using the trapezoidal quadrature
rule. This yields an order two consistent approximation. The term
of interest is
\begin{eqnarray*}
F\left(K_{i},B_{j},T_{m}\right) &=& \int_{S_{0}\lor K}^{B_{j}}-\frac{1}{2}K_{i}^{2}\frac{\partial^{2}\widetilde{C}\left(K_{i},b,T_{m}\right)}{\partial K^{2}}\frac{\partial\sigma_{|S,M}^{2}\left(K_{i},b,T_{m}\right)}{\partial b} \db.
\end{eqnarray*}
Let us assume that we know an approximation to the solution of the PIDE for the discrete
set of barriers $\left(B_{n}\right)_{n<j}$. We also know that, when
the barrier is at the spot level, the integrand is zero.
Hence,
\begin{eqnarray}
\nonumber
F\left(K_{i},B_{j},T_{m}\right) & = & - \,\, \sum_{n=1}^{j-1}\left(\frac{1}{2}K_{i}^{2}
\frac{\partial^{2}\widetilde{C}\left(K_{i},B_n,T_{m}\right)}{\partial K^{2}}
\frac{\partial\sigma_{|S,M}^{2}\left(K_{i},B_{n},T_{m}\right)}{\partial B}\Delta_{B}\right)\\
\label{eqn:corrder}
 &  & -  \,\, \frac{1}{4}K_{i}^{2}
 \frac{\partial^{2}\widetilde{C}\left(K_{i},B_j,T_{m}\right)}{\partial K^{2}}
 \frac{\partial\sigma_{|S,M}^{2}\left(K_{i},B_{j},T_{m}\right)}{\partial B}\Delta_{B}\\
 &  & +  \,\, \mathcal{O}\left(\Delta_{B}^{2}\right).
 \nonumber
\end{eqnarray}
The sum in the first line can be computed for all $K_{i}$ and $T_{m}$
in a forward induction over $j$,
as the solution is known for all the barrier levels involved. This
sum is then treated as a source function for the PDE layer of level
$B_{j}$. We define a vector 
\[
\mathbf{f}_{.,j}^{m}=\left[\begin{array}{c}
-\sum_{n=1}^{j-1}\left(\frac{1}{2}K_{i}^{2}\delta_{KK}\mathbf{u}_{i,n}^{m}\frac{\partial\sigma_{|S,M}^{2}\left(K_{i},B_{n},T_{m}\right)}{\partial B}\Delta_{B}\right)\end{array}\right]_{i=0,1,...,n_{j}}.
\]
The remaining term in (\ref{eqn:corrder}) gives a small correction to the diffusion at $B_{j}$ and we can incorporate it in the discretisation of the corresponding diffusive term
of (\ref{eq:Volettera-Type-PIDE}).

\subsubsection*{Boundary Derivative Term}

To approximate the ``boundary derivative'' at $(B,B,T)$ in (\ref{eq:Volettera-Type-PIDE}),
we use (\ref{eqn:thirdder}) and a first order approximation to the third derivative
with discretisation matrix written as
\[
\mathbf{\Phi}=\frac{1}{\Delta_{K}^{3}}\begin{bmatrix}0 & \ldots & 0 & \text{0} & -1 & 3 & -3 & 1\\
\vdots & \left(0\right) & \vdots & \vdots & \vdots & \vdots & \vdots & \vdots\\
0 & \ldots & 0 & 0 & -1 & 3 & -3 & 1
\end{bmatrix}.
\]
As this term is present in the discretised equation for all interior mesh points, this reduces the consistency order in $\Delta_{K}$ of the overall scheme to one.
We found higher order finite differences to be unstable.
\begin{rem*}
\emph{A second order accurate stable scheme can be obtained by using (\ref{eq:Barrier-Prices-Links-With-Density}) to replace the third order derivative at the boundary by the density, which can be found by solving the Kolmogorov forward equation (\ref{eqn:kfe}) numerically.}
\end{rem*}


\subsubsection*{PIDE in Terms of Matrix Operations}

If we take into account the finite difference approximations and quadrature rule for the integral,
it is now possible to give a discretised PIDE,
for a given triplet $\left(i,j,m\right)\in\left\llbracket 0,N\right\rrbracket \times\left\llbracket 0,M\right\rrbracket \times\left\llbracket 0,P\right\rrbracket $ by
\begin{eqnarray}
\delta_{T}\mathbf{u}_{i,j}^{m}+\left(r\left(T_{m}\right)-q\left(T_{m}\right)\right)K_{i}\delta_{K}\mathbf{u}_{i,j}^{m}-\frac{1}{2}\left(\sigma_{|S,M}^{2}\left(K_{i},B_{j},T_{m}\right)-\frac{1}{2}\frac{\partial\sigma_{|S,M}^{2}\left(K_{i},B_{j},T_{m}\right)}{\partial B}\Delta_{B}\right)K_{i}^{2}\delta_{KK}\mathbf{u}_{i,j}^{m}\nonumber \\
+\frac{1}{2}\sigma_{|S,M}^{2}\left(B_{j},B_{j},T_{m}\right)B_{j}^{2}\left(B_{j}-K\right)^{+}\delta_{KKB}^{-}\mathbf{u}_{n_{j},j}^{m} & =\label{eq:Discretised PIDE}\\
-\sum_{n=1}^{j-1}\frac{1}{2}K_{i}^{2}\delta_{KK}\mathbf{u}_{i,j}^{m}\frac{\partial\sigma_{|S,M}^{2}\left(K_{i},B_{n},T_{m}\right)}{\partial B}\Delta_{B},\nonumber 
\end{eqnarray}
and specify the coefficient matrices
\begin{eqnarray*}
\mathbf{A}_{.,j}^{m} & = & \left(r\left(T_{m}\right)-q\left(T_{m}\right)\right)\text{diag}\left(K_{0},...,K_{n_{j}}\right), \\
\mathbf{B}_{.,j}^{m} & = & -\frac{1}{2}\text{diag}\left(\left(\sigma_{|S,M}^{2}\left(K_{i},B_{j},T_{m}\right)K_{i}^{2}-\frac{1}{2}K_{i}^{2}\frac{\partial\sigma_{|S,M}^{2}\left(K_{i},B_{j},T_{m}\right)}{\partial B}\Delta_{B}\right)_{i\in\left\llbracket 0,n_{j}\right\rrbracket }\right), \\
\mathbf{C}_{.,j}^{m} & = & -\frac{1}{2}\text{diag}\left(\left(\sigma_{|S,M}^{2}\left(B_{j},B_{j},T_{m}\right)B_{j}^{2}\left(B_{j}-K_{i}\right)^{+}\right)_{i\in\left\llbracket 0,n_{j}\right\rrbracket }\right).
\end{eqnarray*}

\begin{rem*}
We can also approximate $\mathbf{B}_{.,j}^{m}$ further by Taylor expansion,
\[
\mathbf{B}_{.,j}^{m}=-\frac{1}{2}\text{diag}\left(\left(\sigma_{|S,M}^{2}\left(K_{i},B_{j}-\frac{\Delta_{B}}{2},T_{m}\right)K_{i}^{2}\right)_{i\in\left\llbracket 0,n_{j}\right\rrbracket }\right)+\mathcal{O}\left(\Delta_{B}^{2}\right),
\]
which has a negative sign irrespective of the mesh size and does not alter the convergence order.

\end{rem*}

Under forward Euler time stepping, the complete scheme can be more compactly written as
\begin{eqnarray*}
\frac{\mathbf{u}_{.,j}^{m+1}-\mathbf{u}_{.,j}^{m}}{\Delta T}+\mathbf{L}_{.,j}^{m}\mathbf{u}_{.,j}^{m}&=&\mathbf{f}_{.,j}^{m}, \\
\mathbf{\mathbf{L}}_{.,j}^{m}&=&\mathbf{A}_{.,j}^{m}\mathbf{D}+\mathbf{B}_{.,j}^{m}\mathbf{D_{2}}+\mathbf{C}_{.,j}^{m}\mathbf{\Phi}.
\end{eqnarray*}
Under $\theta$-time-stepping, the scheme becomes
\[
\left(\mathbf{I}_{n_{j}}+\theta\Delta_{T}\mathbf{L}_{.,j}^{m+1}\right)\mathbf{u}_{.,j}^{m+1}=\left(\mathbf{I}_{n_{j}}-\left(1-\theta\right)\Delta_{T}\mathbf{L}_{.,j}^{m}\right)\mathbf{u}_{.,j}^{m}+\theta\Delta_{T}\mathbf{f}_{.,j}^{m+1}+\left(1-\theta\right)\Delta_{T}\mathbf{f}_{.,j}^{m}.
\]
This includes the second-order accurate Crank-Nicolson scheme for $\theta=0.5$, which is used for our numerical computations.

This section shows that we can adapt the classical tools of finite difference
methods in order to solve this PIDE. Each of the layers being a one-dimensional
PDE, it is solved by successive roll-forward performed by a Gaussian elimination.
Although the Thomas algorithm cannot be applied directly
since $\mathbf{L}_{.,j}^{m}$ is not tri-diagonal (even after applying
the necessary boundary conditions), a sparse Gaussian elimination
will still be $\mathcal{O}\left(n\right)$ as the dense sub-matrix
has a fixed number of columns.

\subsubsection*{Solution Algorithm}

We conclude with a summary of a possible algorithm to solve the PIDE numerically:

\begin{algorithm}[H]
\protect\caption{PIDE Discretisation}

\begin{raggedright}
\emph{$\mathbf{u}_{.,j}^{0}$ = $\left(\left(S_{0}-K_{i}\right)^{+}\mathbf{1}_{S_{0}<B_{j}}\right)_{i\in\left\llbracket 0,N\right\rrbracket }$}
\par\end{raggedright}

\begin{raggedright}
$\mathbf{u}_{.,0}^{m}=0$
\par\end{raggedright}

\begin{raggedright}
$\mathbf{f}_{.,0}^{m}=0$
\par\end{raggedright}
\begin{algor}
\item [{for}] ( $j=0\,;\, j\leq P\,;\, j++$) 

\begin{algor}
\item [{{*}}] \textbf{solve} $B_{j}-$layer PDE for \emph{$\left(\mathbf{u}_{.,j}^{m}\right)_{m\in\left\llbracket 0,M\right\rrbracket }$:}
\[
\left(\mathbf{I}_{n_{j}}+\theta\Delta_{T}\mathbf{L}_{.,j}^{m+1}\right)\mathbf{u}_{.,j}^{m+1}=\left(\mathbf{I}_{n_{j}}-\left(1-\theta\right)\Delta_{T}\mathbf{L}_{.,j}^{m}\right)\mathbf{u}_{.,j}^{m}+\theta\Delta_{T}\mathbf{f}_{.,j}^{m+1}+\left(1-\theta\right)\Delta_{T}\mathbf{f}_{.,j}^{m}
\]

\item [{{*}}] \textbf{compute} $\mathbf{f}_{.,j+1}^{m}$ from $\mathbf{f}_{.,j}^{m}$
and $\mathbf{u}_{.,j}^{m}$ (for $j<P$)
\end{algor}
\item [{endfor}]~\end{algor}
\end{algorithm}

\medskip

\section{Numerical Results\label{sec:Numerical-Results}}

\subsection{Pricing Under the Mimicking Brunick-Shreve Model}
\label{sec:bwpricing}

For validation purposes of the forward equation, we will use a numerical solution of the backward
pricing PDE in the Brunick-Shreve model.
The backward PDE gives the price over a range of $S$ and $M$ and $t$, for fixed $K$, $B$ and $T$.

\subsubsection*{Augmented State Feynman-Kac PDE\label{sec:Feynman-Kac-Equation-of-Barrier-Option}}

Recall that the spot diffusion
of the underlying under $\mathbb{Q}$ is
\begin{eqnarray}
\label{eqn:bwbs}
\cfrac{dS_{t}}{S_{t}} = \left(r\left(t\right)-q\left(t\right)\right)\, dt+\sigma\left(S_{t},M_{t},t\right) \, dW_{t}. \
\end{eqnarray}

We note that the spot process is not Markovian anymore due to the dependence of the volatility on the running maximum. Hence, we cannot use the standard one dimensional Feynman-Kac PDE with Dirichlet boundary condition, but need to augment the state space even for barrier options (as well as European options). Assume that strike \textbf{$K$}, barrier $B$ and maturity $T$ are
all fixed. Then, following identical steps to the derivation by Shreve \cite{Shreve2008} in the Black-Scholes case,
we get the following (see Appendix \ref{sec:Derivation-of-the-Kolmogorov-Equations}).

\begin{prop}
\label{prop:bw}
Under a Brunick-Shreve model (\ref{eqn:bwbs}),
the up-and-out call price $C$
defined in the region $\small{\Omega_{B} = \{(x,y): 0<x<y, S_0<y<B\}}$ is the solution to the following initial boundary value problem:
\begin{eqnarray}
\label{eq:Brunick_Backward_PDE}
\frac{\partial C}{\partial t}+\left(r\left(t\right)-q\left(t\right)\right)x\frac{\partial C}{\partial x}+\frac{1}{2}x^{2}\sigma^{2}\left(x,y,t\right)\frac{\partial^{2}C}{\partial x^{2}}-r\left(t\right)C=0,\, && (x,y) \in \Omega_{B}, \;0<t<T, \\
C\left(0,y,t\right)=0, && t\leq T,\nonumber \\
C\left(x,y,T\right)=\left(x-K\right)^{+}\mathbf{1}_{y<B}, && y>S_0, \nonumber \\
\left.\frac{\partial C\left(x,y,t\right)}{\partial y}\right\rfloor _{x=y}=0, && y>S_0, \nonumber \\
C\left(x,B,t\right)=0, && x>0,\, t\leq T. \nonumber
\end{eqnarray}

\end{prop}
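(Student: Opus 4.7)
The plan is the standard augmented-state Feynman-Kac argument, mirroring Shreve's derivation in the Black-Scholes case. By the Markov property of $(S_t,M_t)$ under Assumption \ref{assumption1} and the no-arbitrage pricing rule, the value at time $t$ admits the representation
\begin{equation*}
C(x,y,t) = \mathbb{E}^{\mathbb{Q}}\!\left[\exp\!\Big(-\!\int_t^T r(u)\,du\Big)(S_T-K)^+\mathbf{1}_{M_T<B}\,\Big|\,S_t=x,\,M_t=y\right],
\end{equation*}
so that $D(t)C(S_t,M_t,t)$ is a $\mathbb{Q}$-martingale on $\{t<T\wedge\tau_B\}$, with $\tau_B=\inf\{u:M_u\geq B\}$. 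Assuming sufficient regularity of $C$ on $\Omega_B$ (a $C^{1,2,1}$-type assumption, consistent in spirit with the smooth-density hypothesis in Assumption \ref{assumption1}), the plan is to apply It\^o's formula and equate the finite-variation part to zero.

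Because $M$ is of bounded variation, It\^o's formula yields
\begin{align*}
d\bigl(D(t)C(S_t,M_t,t)\bigr) &= D(t)\bigl(\partial_t C + (r(t)-q(t))S_t\,\partial_x C + \tfrac{1}{2}\sigma^2 S_t^2\,\partial_{xx}C - r(t)C\bigr)\,dt \\
&\quad + D(t)\,\sigma(S_t,M_t,t) S_t\,\partial_x C\,dW_t + D(t)\,\partial_y C(S_t,M_t,t)\,dM_t.
\end{align*}
The stochastic integral against $W$ is a local martingale (a true martingale up to $T\wedge\tau_B$ by localisation and the boundedness of $\sigma$ combined with \eqref{eq:alpha_condition}). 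The crucial point is that $dM_t$ is supported on $\{S_t=M_t\}$, since the running maximum only increases when the spot reaches a new high; hence this contribution equals $\int_0^{\cdot}\partial_y C(M_u,M_u,u)\,dM_u$. For the martingale property to hold, both finite-variation pieces must vanish separately: the $dt$ integrand gives the interior PDE in \eqref{eq:Brunick_Backward_PDE}, while the $dM_t$ integrand forces the Neumann condition $\partial_y C(x,y,t)|_{x=y}=0$, because $M$ can be driven to strictly increase through any diagonal point with positive probability.

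The remaining conditions follow directly. The terminal condition $C(x,y,T)=(x-K)^+\mathbf{1}_{y<B}$ is the payoff definition. The knockout condition $C(x,B,t)=0$ holds because $M_t=B$ implies $M_T\geq B$ almost surely, so the barrier indicator is zero at maturity. Finally, $C(0,y,t)=0$ follows because the origin is absorbing under the multiplicative SDE \eqref{eqn:bwbs}, and the call payoff vanishes when $S_T=0$.

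The main obstacle is the rigorous justification of the Neumann condition and the implicit regularity required for It\^o's formula with the running maximum. The process $M_t$ induces reflection-type behaviour on the diagonal $\{x=y\}$, which, strictly speaking, is best handled either by a Tanaka-type decomposition or by a smoothing argument on $M$ followed by a passage to the limit. Once these regularity issues are granted, the argument reduces cleanly to the classical case, the only novelty being that $\sigma$ carries an additional $y$-dependence that is transparent to the PDE derivation.
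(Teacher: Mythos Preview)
Your proof is correct and follows essentially the same route as the paper: apply It\^o to the discounted value $D(t)C(S_t,M_t,t)$, use the martingale property, and read off the interior PDE from the $dt$ part and the Neumann condition from the $dM_t$ part (which is supported on the diagonal). You additionally spell out the terminal and boundary conditions and flag the regularity issues more explicitly than the paper does, but the argument is the same.
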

\subsubsection*{Finite Difference Approximation}

We briefly explain the numerical solution of (\ref{eq:Brunick_Backward_PDE}).

We first notice that (\ref{eq:Brunick_Backward_PDE})
is a classical Black-Scholes PDE (where the volatility is a function of $S$, a ``parameter''
$M$, and $t$), and
derivatives with respect to $y$ only enter via
the Neumann boundary condition (on the diagonal $x=y$).
This means that
for a fixed $y$, we have to solve a Black-Scholes type PDE on
$\left[0,y\right]\ \times \{y\}$. From now on, the one-dimensional
PDE for a given level of $y$ will be referred to as a ``PDE layer''.

If $y=B$, then $C\left(x,y,t\right)=0$ for all $x$. We can then use the Neumann
boundary condition to bootstrap backwards in $y$ from $B$ to $S_{0}$.
To illustrate the idea, 
consider now $y=B-\Delta_y$ for a mesh size $\Delta_y$,
then
\[
\left.\frac{\partial C\left(x,y,t\right)}{\partial y}\right\rfloor _{x=y}=\frac{C\left(y,y+\Delta_{y},t\right)-C\left(y,y,t\right)}{\Delta y}+\mathcal{O}\left(\Delta_{y}\right)=0\implies C\left(y,y,t\right)=C\left(y,y+\Delta_{y},t\right)+\mathcal{O}\left(\Delta_{y}^{2}\right).
\]
We can build the next PDE layer at $y$ using an approximate Dirichlet boundary
condition at $x=y$, $C\left(y,y,t\right)\approx C\left(y,y+\Delta_{y},t\right)$ (here, 0).
Subsequent PDE layers can be
constructed similarly by backward reasoning. We will describe a higher-order version below.
The solution in each layer will then depend on layers with greater $y$ via this boundary condition.
The premium value is retrieved from $C\left(S_{0},S_{0},0\right)$.

We now describe the construction of meshes of the form $(x_j(y),y)_{j=1,\ldots,N_x(y)}$ for layer $y$, as well as of the spacing of points in the $y$-direction.
The best accuracy was achieved numerically with a refined running maximum grid close to $S_{0}$ and, for each $y$, a uniformly spaced spot grid
(except for nodes close to the diagonal, as detailed below).

Denote by $\Delta_{y}$ a desired target mesh width. Then for the $y$-mesh we choose
$N_{y}=\left\lceil \frac{B-S_{0}}{\Delta_{y}}\right\rceil $ points
with an exponential grading as defined in \cite{White2013}, so that
$\mbox{\ensuremath{\forall\, i}\ensuremath{\in\left\llbracket 0,N_{y}\right\rrbracket }}$:
\begin{eqnarray*}
y_{i} & = & \left(S_{0}-\theta\right)+\theta\,\exp\left(\lambda z_{i}\right),\\
\lambda & = & 2,\\
\theta & = & \frac{B-S_{0}}{e^{\lambda}-1},\\
z_{i} & = & \frac{i}{\left(N_{y}+1\right)}.
\end{eqnarray*}


The construction of the mesh, illustrated in Figure \ref{fig:bwmesh},
ensures that $y_{i}$ lies on the grid of the PDE layer
of levels $y_{i+1}$ and $y_{i+2}$,
and hence no interpolation is needed to retrieve the boundary condition for the next PDE layer
as described
in the next paragraph.

\begin{figure}[h]
\includegraphics[scale=0.2]{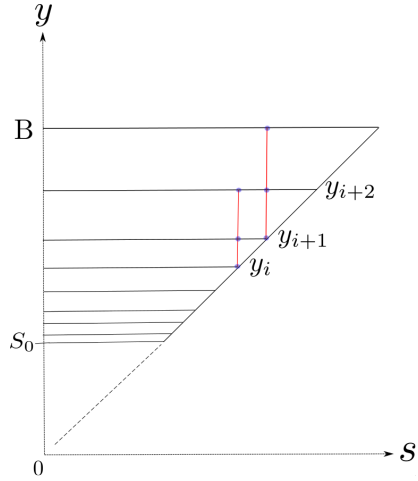}
\caption{Mesh construction for backward equation. The spot grid is recomputed for every PDE
layer to be $(x_j(y_i),y_i)$, where $x_j = j \Delta_x$, $j<N_x-2$,
$x_{N_x-j}=y_{i-j}$, $j\in \{0,1,2\}$, $\Delta_x = y_{i-2}/(N_x-2)$ and
$N_{x}=\left\lceil y_{i}/\Delta_{y}\right\rceil $ 
to keep the spacing approximately constant.}
\label{fig:bwmesh}
\end{figure}

In order to increase accuracy from the numerical boundary condition described at the start of this section, 
we use second order Taylor expansion in $y$-direction instead of
a simple backward finite difference for all $y_i$ with $i<N_y-1$. The bootstrap idea stays
the same. Indeed, write $c\left(y\right)=c\left(y_{i},y,t\right)$
where $y_{i}$ is a certain given level of the $y$-discretisation
and $c'\left(y\right)=\frac{\partial c\left(y_{i},y,t\right)}{\partial y}$.
We consider a Taylor expansion around $y=y_{i}$: $c\left(y\right)=c\left(y_{i}\right)+\frac{1}{2}\left(y-y_{i}\right)^{2}c''\left(y_{i}\right)+o\left((y-y_i)^{2}\right)$
(as $c'\left(y_{i}\right)=0$ by the boundary condition). Using this
at points $y_{i+1}$ and $y_{i+2}$, we get $c\left(y_{i}\right)=c\left(y_{i+2}\right)+\frac{\left(y_{i+2}-y_{i}\right)^{2}}{\left(y_{i+2}-y_{i}\right)^{2}-\left(y_{i+1}-y_{i}\right)^{2}}\left(c\left(y_{i+1}\right)-c\left(y_{i+2}\right)\right)$. 
This is then used as a boundary value at $y_i$. The first layer at $N_y-1$ is treated with a first order backward difference as described previously.

\subsection{Numerical Validation}

Our numerical validation consists in pricing a set of up-and-out call
options for different strikes, maturities and barriers with:
\begin{enumerate}
\item the Forward PIDE (\ref{eq:Volettera-Type-PIDE}) and one numerical
solution for the whole set of deal parameters;
\item the Backward Feynman-Kac PDE (\ref{eq:Brunick_Backward_PDE}) and
as many solutions as triplets of deal parameters.
\end{enumerate}
The goal is to make them match with about one basis point tolerance.

The Brunick-Shreve volatility is generated with an SVI parametrisation
in both barrier and strike dimension (details can be found in \cite{Gatheral2006})
defined as

\begin{eqnarray}
\label{eqn:svi}
\sigma\left(x,y,t\right) &=& \frac{1}{2}\left(\sigma_{SVI}\left(\log\left(\frac{x}{S_{0}}\right),t+1\right)+\sigma_{SVI}\left(\log\left(\frac{y}{S_{0}}\right),t+1\right)\right), \\
\sigma_{SVI}\left(k,t\right)&=&\sqrt{a+b\left(\rho\left(k-m\right)+\sqrt{\left(k-m\right)^{2}+\sigma^{2}}\right)}.
\nonumber
\end{eqnarray}
The Brunick-Shreve volatility surface has a shape as in Figure \ref{Brunick VolSurf Figure}.

\begin{figure}[H]
\includegraphics[scale=0.33]{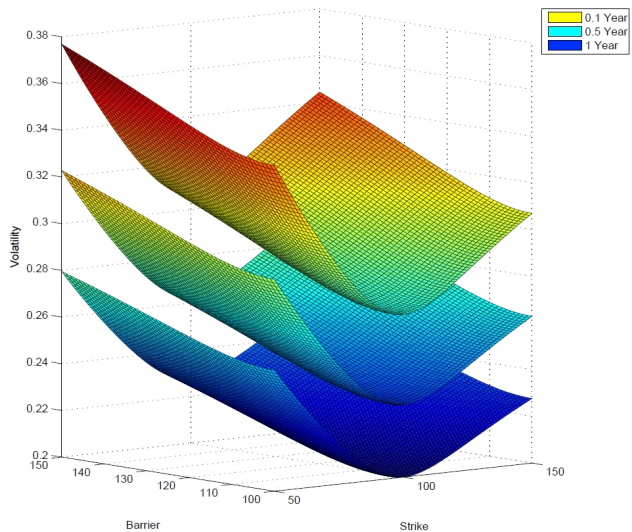}
\protect\caption{Assumed Brunick-Shreve volatility surface as per (\ref{eqn:svi}), where $a=0.04,b=0.2,\sigma=0.2,\rho=m=0$,
and $T\in \{0.1, 0.5, 1\}$.}
\label{Brunick VolSurf Figure}
\end{figure}
%

The spot is $S_0 = 100$, the risk-free rate is $r=0.1$ and the dividend
yield is $q=0.05$.

We use the numerical schemes as described in 
\ref{sec:fwpricing}
and
\ref{sec:bwpricing}
for the PIDE and PDE solution, respectively, with Crank-Nicolson time stepping.
The discretisation parameters are
$N=P=N_y=1000$ space steps (with $N_x$ adjusted as described) and $M=1000$ time steps.

We compare prices for $\left(K,B,T\right)$ covering the set $\left[0,120\right]\times\left[100,120\right]\times\left\{ 1\right\} $
with 120 points in strike and 40 points in barrier levels, see Table \ref{table:errs}. The error
is computed as the relative error if the net present value (NPV) is
above one and as absolute error otherwise.

\begin{center}
\begin{tabular}{|c|c|}
\hline 
{\footnotesize{}Average Difference} & {\footnotesize{}Maximum Difference}\tabularnewline
\hline 
\hline 
\selectlanguage{english}%
4.6e-5\selectlanguage{british}%
 & \selectlanguage{english}%
3.5e-4\selectlanguage{british}%
\tabularnewline
\hline 
\end{tabular}\label{tab:general-figures-errors}\captionof{table}{Difference between forward and backward solutions over strikes between 0 and 120, and up-and-out barriers between 100 and 120, all for maturity 1.}
\label{table:errs}
\par\end{center}

\vspace{1em}

We can analyse more precisely the behaviour for a few barrier levels.
For example, Figure \ref{PIDEvsPDE} shows the difference as a
function of strike. The associated values for a barrier fixed at 120
are in Table \ref{tab:Errors-on-a-strike-ladder}.


\begin{minipage}[t]{1\columnwidth}%
\begin{minipage}[t]{0.49\columnwidth}%
\begin{figure}[H]
\vspace*{0bp}

\begin{centering}
\includegraphics[scale=0.31]{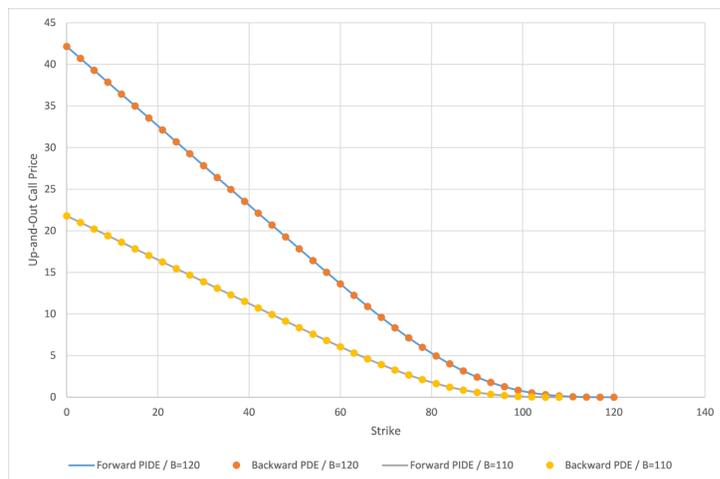}
\par\end{centering}

\protect\caption{Forward PIDE vs Backward PDE:\protect \\
up-and-out call option prices with Brunick-Shreve volatility, $S=100$
and $T=1$.\protect}

\label{PIDEvsPDE}
\end{figure}
\end{minipage}%
\begin{minipage}[t]{0.49\columnwidth}%
\vspace*{0bp}

\begin{center}
\begin{tabular}{|c|c|c|c|}
\hline 
Strike & Forward & Backward  & Rel. \tabularnewline
 & PIDE & PDE & Diff.\tabularnewline
\hline 
\hline 
0 & 42.1486 & 42.1486 & 1e-06\tabularnewline
\hline 
9 & 37.8567 & 37.8568 & 1e-06\tabularnewline
\hline 
18 & 33.5649 & 33.5650 & 2e-06\tabularnewline
\hline 
27 & 29.2731 & 29.2732 & 2e-06\tabularnewline
\hline 
36 & 24.9815 & 24.9815 & 3e-06\tabularnewline
\hline 
45 & 20.6928 & 20.6929 & 3e-06\tabularnewline
\hline 
54 & 16.4263 & 16.4264 & 2e-06\tabularnewline
\hline 
63 & 12.2536 & 12.2535 & 5e-06\tabularnewline
\hline 
72 & 8.3438 & 8.3436 & 2e-05\tabularnewline
\hline 
81 & 4.9680 & 4.9677 & 6e-05\tabularnewline
\hline 
90 & 2.4170 & 2.4168 & 9e-05\tabularnewline
\hline 
99 & 0.8472 & 0.8472 & 9e-07\tabularnewline
\hline 
108 & 0.1546 & 0.1547 & 1e-04\tabularnewline
\hline 
117 & 0.0023 & 0.0023 & 8e-05\tabularnewline
\hline 
120 & 0.0000 & 0.0000 & 0\tabularnewline
\hline 
\end{tabular}
\par\end{center}

\captionof{table}{\label{tab:Errors-on-a-strike-ladder} Results and relative errors for different strikes; barrier = 120, forward = 105.13.}%
\end{minipage}%
\end{minipage}

\begin{flushleft}
The results match with the desired accuracy.
\par\end{flushleft}

\section{Conclusions}
\label{sec:conclusions}

In this paper, we provided a new and numerically effective way to work with observed barrier option prices
to determine the Brunick-Shreve Markovian projection of a stochastic variance of a semi-martingale $S$ with running maximum $M$ onto $\left(S_{t},M_{t}\right)$.
The present Dupire-type formula, which provides similar advantages (and disadvantages) as the standard Dupire approach for vanillas,
was then re-arranged to obtain a forward PIDE, which is convenient to control the numerical stability of
best fit algorithms \cite{Crepey2010}. There is a well-known literature
on the calibration of vanilla options through Markovian projection
for LSV models (see Guyon and Labordère \cite{Guyon2011} and Ren, Madan and Quian \cite{Ren&Madan2007}).
We believe that extending these methods combined with the Forward
PIDE we presented in this paper will lead to novel calibration
algorithms for barrier options.


\section*{Acknowledgements}
The authors thank Marek Musiela from the \textsc{Oxford-Man Institute},
Alan Bain and Simon McNamara from\textsc{ BNP Paribas London} as well as an anonymous referee for
their insightful comments.

\bibliographystyle{plain}
\bibliography{BibTex/GeneralBiblio}

\appendix

\section{Derivation of the Kolmogorov Forward and Backward Equations}
\label{sec:Derivation-of-the-Kolmogorov-Equations}

In this section, we derive the forward and backward Kolmogorov equations for a model of the type
\begin{eqnarray*}
\cfrac{dS_{t}}{S_{t}} &=& (r\left(t\right)-q\left(t\right)) \, dt+\sigma(S_{t},M_{t},t) \, dW_{t}, \\
M_{t} &=& \underset{0\leq u\leq t}{\max}S_{u}
\end{eqnarray*}
and under Assumption \ref{assumption1}.

\subsection*{The Backward Equation (Derivation of Proposition \ref{prop:bw})}

We first note that the up-and-out call is a traded contract. As a consequence, the discounted price process $D(t){C}_{t}$ is a martingale under the risk-neutral measure $\mathbb{Q}$. Hence 
\[
D(t)C_{t}=\mathbb{E^{\mathbb{Q}}}\left[D(T)\left({S}_{T}-K\right)^{+}\mathbf{1}_{{M}_{T}<B} \mid \mathcal{F}_{t} \right] \qquad
 \; 0\le t\le T.
\]
Furthermore, as $\left(S_{t},M_{t}\right)$
is a Markovian vector, there exists a function $v$
such that
\[
D(t)v(x,y,t)=\mathbb{E^{\mathbb{Q}}}\left[D(T)\left({S}_{T}-K\right)^{+}\mathbf{1}_{{M}_{T}<B} \mid S_{t}=x,\, M_{t}=y \right] \qquad
\forall(x,y) \in \Omega_{B}, \; 0\le t\le T.
\]
We assume that $v$ is smooth and belongs to the space $C^{2,1,1}(\Omega_{B}\times \mathbb{R}^{+})$. 
Now write
\[
C_t = v(S_t,M_t,t).
\]

By the Itô-Doeblin lemma and recalling that the running maximum process $M$ has finite variation, zero quadratic variation
and zero cross-variation with $S$,
\begin{eqnarray*}
d(D(t)C_{t}) & = & -r\left(t\right)D(t)C_{t} \, dt+D(t)\left[\frac{\partial v}{\partial t} \, dt+\frac{\partial v}{\partial x} \, dS_{t}+
\frac{\partial v}{\partial y} \, dM_{t} + \frac{1}{2}\frac{\partial^{2}v}{\partial x^{2}} \, d[S]_t \right].
\end{eqnarray*}
The process $D\left(t\right){C}_{t}$ is a martingale if and only if
\begin{eqnarray}
\label{eqn:bwpde}
\frac{\partial v}{\partial t}-r\left(t\right) v+(r\left(t\right)-q\left(t\right)) x \frac{\partial v}{\partial x}+\frac{1}{2}\sigma^{2} x^{2} \frac{\partial^{2} v}{\partial x^{2}} & = & 0, \quad (x,y) \in \Omega_{B}, \;0<t<T, \\
\left.\frac{\partial v}{\partial y}\right\rfloor _{x=y} & = & 0, \quad y>S_0.
\nonumber
\end{eqnarray}
This gives the desired backward PDE.
We note that the boundary condition is equivalent to
\begin{eqnarray}
\label{eqn:equivbc}
\frac{\D v(y,y,t)}{\dy}=\left.\frac{\partial v(x,y,t)}{\partial x}\right\rfloor _{x=y}.
\end{eqnarray}

\subsection*{The Forward Equation (Derivation of Proposition \ref{prop:kfe})}
We assume as before that the joint density function $\phi$ of the process $(S_{t},M_{t})$ exists and belongs to the space $C^{2,1,1}(\Omega\times \mathbb{R}^{+})$. 

Let $h \in C^{2,1}\left(\Omega\right)$ be a \emph{test} function of two (spatial) variables. We also assume that
$h$ and its derivatives vanish for $x=0$. We denote $\mu\left(t\right) = r\left(t\right)-q\left(t\right)$.

We apply the Itô-Doeblin lemma and get: 
\begin{eqnarray*}
	h\left(S_{t},M_{t}\right) & = & h\left(S_{0},M_{0}\right)\\
	&  & +\int_{0}^{t}\left[\mu\left(u\right)S_{u}\frac{\partial h\left(S_{u},M_{u}\right)}{\partial x}+\frac{1}{2}\sigma^{2}\left(S_{u},M_{u},u\right)S_{u}^{2}\frac{\partial^{2}h\left(S_{u},M_{u}\right)}{\partial x^{2}}\right]\du\\
	&  & +\int_{0}^{t}\sigma\left(S_{u},M_{u},u\right) S_{u} \frac{\partial h\left(S_{u},M_{u}\right)}{\partial x}\, dW_{u}+\int_{0}^{t}\frac{\partial h\left(S_{u},M_{u}\right)}{\partial y}\, dM_{u}\,.
\end{eqnarray*}
By taking expectations we can write 
\begin{eqnarray*}
	\mathbb{E^{\mathbb{Q}}}\left[h\left(S_{t},M_{t}\right)\right] & = & h\left(S_{0},M_{0}\right)+\mathbb{E}^{\mathbb{Q}}\left[\int_{0}^{t}\mu\left(t\right)S_{u}\frac{\partial h\left(S_{u},M_{u}\right)}{\partial x}\du\right]\\
	&  & +\mathbb{E}^{\mathbb{Q}}\left[\int_{0}^{t}\frac{1}{2}\sigma^{2}\left(S_{u},M_{u},u\right)S_{u}^{2}\frac{\partial^{2}h\left(S_{u},M_{u}\right)}{\partial x^{2}}\du\right]+\mathbb{E}^{\mathbb{Q}}\left[\int_{0}^{t}\frac{\partial h\left(S_{u},M_{u}\right)}{\partial y}\, dM_{u}\right]\,.
\end{eqnarray*}
Hence, we can introduce the density function $\phi$ of $(S,M)$, assumed twice differentiable,
\[
\mathbb{E^{\mathbb{Q}}}\left[h\left(S_{t},M_{t}\right)\right] = \int_{S_0}^\infty \int_0^y
 h(x,y) \phi(x,y,t)\dx\dy \,,
\]
differentiate with respect to $t$, and write for all $t>\text{0}$
\begin{eqnarray}
	\int_{S_{0}}^{\infty}\int_{0}^{y} h\frac{\partial\phi}{\partial t} \dx \dy=\int_{S_{0}}^{\infty}I_{\mu}(y)\dy+\frac{1}{2}\int_{S_{0}}^{\infty}I_{\sigma}(y)\dy+\frac{\D}{\D t}\mathbb{E}^{\mathbb{Q}}\left[\int_{0}^{t}\frac{\partial h\left(S_{u},M_{u}\right)}{\partial y}\, dM_{u}\right],\nonumber \\
	\label{eqn:intdefs}
\end{eqnarray}
where, explicitly with all arguments, 
\begin{eqnarray*}
	I_{\sigma}(y) & = & \int_{0}^{y}\left(\sigma^{2}\left(x,y,t\right)x^{2}\frac{\partial^{2}h\left(x,y\right)}{\partial x{}^{2}}\right)\phi(x,y,t)\dx\,,\\
	I_{\mu}(y) & = & \int_{0}^{y}\left(\mu\left(t\right)x\frac{\partial h(x,y)}{\partial x}\right)\phi(x,y,t)\dx\,.
\end{eqnarray*}
For $I_{\sigma}$, we can perform integration by parts twice to get
\begin{eqnarray*}
	I_{\sigma}(y) & = & \left.\frac{\partial h(x,y)}{\partial x}\right\rfloor _{x=y}\sigma^{2}(y,y,t)y^{2}\phi(y,y,t)-h(y,y)\left.\frac{\partial\sigma^{2}(x,y,t)x^{2}\phi(x,y,t)}{\partial x}\right\rfloor _{x=y}\\
	&  & +\int_{0}^{y} h(x,y)\frac{\partial^{2}\sigma^{2}(x,y,t)x^{2}\phi(x,y,t)}{\partial x^{2}}\dx\,.
\end{eqnarray*}
For the integral of $I_\sigma$, since
\[
\left.\frac{\partial h\left(x,y\right)}{\partial x}\right\rfloor _{x=y}=\frac{\D h\left(y,y\right)}{\D y}-\left.\frac{\partial h\left(x,y\right)}{\partial y}\right\rfloor _{x=y},
\]
we can integrate the first term by parts with respect to $y$, 
\begin{eqnarray*}
	\int_{S_{0}}^{\infty}I_{\sigma}(y)\dy & = & -\int_{S_{0}}^{\infty} h(y,y)\left[\left.\frac{\partial\sigma^{2}(x,y,t)x^{2}\phi(x,y,t)}{\partial x}\right\rfloor _{x=y}+\frac{\D\sigma^{2}(y,y,t)y^{2}\phi(y,y,t)}{\D y}\right]\dy\\
	&  & +\;\int_{S_{0}}^{\infty}\int_{0}^{y}h(x,y)\left[\frac{\partial^{2}\sigma^{2}(x,y,t)x^{2}\phi(x,y,t)}{\partial x^{2}}\right]\dx\dy \\
	 & & -\int_{S_{0}}^{\infty}\left.\frac{\partial h(x,y)}{\partial y}\right\rfloor _{x=y}\sigma^{2}(y,y,t)y^{2}\phi(y,y,t)\dy \\
	&  & -\; h\left(S_{0},S_{0}\right)\sigma^{2}\left(S_{0},S_{0},t\right)S_{0}^{2}\phi\left(S_{0},S_{0},t\right).
\end{eqnarray*}
For $I_{\mu}$, we integrate by parts once, 
\begin{eqnarray*}
	I_{\mu}(y) & = & \mu\left(t\right)h(y,y)y\phi(y,y,t)-\int_{0}^{y}h(x,y)\left[(r\left(t\right)-q\left(t\right))\frac{\partial x\phi(x,y,t)}{\partial x}\right]\dx.
\end{eqnarray*}
We insert in (\ref{eqn:intdefs}), {\small{}{} 
	\begin{eqnarray}
		&  & \int_{S_{0}}^{\infty}\int_{0}^{y} h(x,y)\left[\frac{\partial\phi(x,y,t)}{\partial t}+\mu\left(t\right)\frac{\partial x\phi(x,y,t)}{\partial x}-\frac{1}{2}\left(\frac{\partial^{2}\sigma^{2}(x,y,t)x^{2}\phi(x,y,t)}{\partial x^{2}}\right)\right]\dx\dy=\label{eqn:forwardintversion}\\
		&  & \int_{S_{0}}^{\infty} h(y,y)\left[\mu\left(t\right)y\phi(y,y,t)-\frac{1}{2}\left(\left.\frac{\partial\sigma^{2}(x,y,t)x^{2}\phi(x,y,t)}{\partial y}\right\rfloor _{x=y}+2\left.\frac{\partial\sigma^{2}(x,y,t)x^{2}\phi(x,y,t)}{\partial x}\right\rfloor _{x=y}\right)\right]\dy\nonumber \\
		&  & -\;\frac{1}{2}h\left(S_{0},S_{0}\right)\sigma^{2}\left(S_{0},S_{0},t\right)S_{0}^{2}\phi\left(S_{0},S_{0},t\right)\nonumber \\
		&  & -\int_{S_{0}}^{\infty}\left.\frac{\partial h(x,y)}{\partial y}\right\rfloor _{x=y}\sigma^{2}(y,y,t)y^{2}\phi(y,y,t)\dy +\frac{\D}{\D t}\mathbb{E}^{\mathbb{Q}}\left[\int_{0}^{t}\frac{\partial h\left(S_{u},M_{u}\right)}{\partial y}\, dM_{u}\right]. \nonumber
	\end{eqnarray}
}{\small \par}

Let us first consider all functions $h$ of compact support on $\Omega$.
Hence $h$ and its derivatives are assumed to additionally vanish
for $x=y$. Since $M_{t}$ only grows when $M_{t}=S_{t}$, the term
$\int_{0}^{t}\frac{\partial h\left(S_{u},M_{u}\right)}{\partial y}\, dM_{u}$
is zero. Then (\ref{eqn:forwardintversion}) becomes 
\[
\int_{S_{0}}^{\infty}\int_{0}^{y} h(x,y)\left[\frac{\partial\phi(x,y,t)}{\partial t}+\mu\left(t\right)\frac{\partial x\phi(x,y,t)}{\partial x}-\frac{1}{2}\left(\frac{\partial^{2}\sigma^{2}(x,y,t)x^{2}\phi(x,y,t)}{\partial x^{2}}\right)\right]\dx\dy=0\,,
\]
where we conclude that for all $\left(x,y,t\right)$ in $\left(\Omega\times\mathbb{R}_{*}^{+}\right)$
\[
\frac{\partial\phi(x,y,t)}{\partial t}+\mu\left(t\right)\frac{\partial x\phi(x,y,t)}{\partial x}-\frac{1}{2}\left(\frac{\partial^{2}\sigma^{2}(x,y,t)x^{2}\phi(x,y,t)}{\partial x^{2}}\right)=0\,.
\]
Let us now consider all functions $h$ which do not vanish at $x=y$
but only depend on the space variable $x$ such that we define $g\left(x\right)=h\left(x,y\right)$
with $g$ in the space $C^{2}$$\left(\mathbb{R}^{+}\right)$. In that case,
the terms $\int_{0}^{t}\frac{\partial h}{\partial y}\, dM_{u}$ and
$\int_{S_{0}}^{\infty}\left.\frac{\partial h(x,y)}{\partial y}\right\rfloor _{x=y}\left.\sigma^{2}y^{2}\phi\right\rfloor _{x=y} \dy$
vanish. From (\ref{eqn:forwardintversion}) we can write
\begin{eqnarray*}
	0=\int_{S_{0}}^{\infty} g(y)\left[\mu\left(t\right)y\phi(y,y,t)-\frac{1}{2}\left(\left.\frac{\partial\sigma^{2}(x,y,t)x^{2}\phi(x,y,t)}{\partial y}\right\rfloor _{x=y}+2\left.\frac{\partial\sigma^{2}(x,y,t)x^{2}\phi(x,y,t)}{\partial x}\right\rfloor _{x=y}\right)\right]\dy-\\
	\frac{1}{2}g\left(S_{0}\right)\sigma^{2}\left(S_{0},S_{0},t\right)S_{0}^{2}\phi\left(S_{0},S_{0},t\right).&
\end{eqnarray*}
If additionally we impose $g$ to vanish at $x=S_{0}$, then we conclude
that for all $\left(x,y,t\right)$ in $\left(\Omega\times\mathbb{R}_{*}^{+}\right)$

\[
\mu\left(t\right)y\phi(y,y,t)=\frac{1}{2}\left.\frac{\partial\sigma^{2}(x,y,t)x^{2}\phi(x,y,t)}{\partial y}\right\rfloor _{x=y}+\left.\frac{\partial\sigma^{2}(x,y,t)x^{2}\phi(x,y,t)}{\partial x}\right\rfloor _{x=y}\,.
\]
Finally, for all functions $g$ which do not vanish at $x=S_{0}$
we are left with 
\[
g\left(S_{0},t\right)\sigma^{2}\left(S_{0},S_{0},t\right)S_{0}^{2}\phi\left(S_{0},S_{0},t\right)=0\,,
\]
which concludes the proof.

\begin{rem*}
It is straightforward to generalise the approach to work directly under (\ref{eq:Model Definition}), say under a local-stochastic volatility model, to derive
forward and backward equations for the joint density of $S_t$, $M_t$ and the stochastic variance $V_t$. We find again the dual operator with respect to the $L_2$ inner product, now with three spatial dimensions.
\end{rem*}

\end{document}